\newtheorem{thm}{Theorem}[section]
\newtheorem{lem}[thm]{Lemma}
\newtheorem{prpn}[thm]{Proposition}
\numberwithin{equation}{section} 
\begin{document}
\title{Perpetual Cancellable American Call Option}
\author{Thomas J. Emmerling}
\address{Department of Mathematics\\
University of Michigan\\
Ann Arbor, MI 48109}
\email{the@umich.edu}
\thanks{The author is sincerely grateful to E. Bayraktar for valuable discussions which improved this paper.  Additionally, the author
would like to thank an anonymous referee and J. Detemple for their helpful suggestions and insightful comments.}


\onehalfspacing

\begin{abstract}
This paper examines the valuation of a generalized American-style
option known as a Game-style call option in an infinite time horizon
setting.  The specifications of this contract allow the writer to
terminate the call option at any point in time for a fixed penalty
amount paid directly to the holder. Valuation of a perpetual
Game-style put option was addressed by Kyprianou (2004) in a
Black-Scholes setting on a non-dividend paying asset. Here, we
undertake a similar analysis for the perpetual call option in the
presence of dividends and find qualitatively different explicit
representations for the value function depending on the relationship
between the interest rate and dividend yield.  Specifically, we find that the value function
is not convex when $r>d$.  Numerical results show the impact this phenomenon has upon 
the vega of the option.  
\end{abstract}

\maketitle

\section{Introduction}
In current times, it is not hard to imagine a financial system
burdened by illiquidity over a large cross section of total market
activity.  Under such circumstances, trading in the market
might cease to be an option even for large financial firms
interested in hedging their short contracts.  Indeed, cancelling or
recalling such contracts might be one of the few ways to effectively mitigate undesirable positions in turbulent times.  As such,
derivative securities which include callback provisions or
cancellable features represent attractive instruments to writers of
these contracts.  The following discussion addresses the valuation
of a common American-style claim with the aforementioned termination
specification built into the contract.  \\
\indent Kifer (2000) was the first to broach the problem of
valuation for American-style options with a cancellation feature
available to the short side of the contract. In that article, Kifer
applied the continuous time game theoretic results of Lepeltier and
Maingueneau (1984) to these generalized American options and found
the fair price was equal to the value of a Dynkin game (see e.g. Dynkin (1969), Neveu (1975)) between the
long and the short sides of the contract.  The close relationship between these options and Dynkin games fostered a renewed interest in such games and subsequently brought forth general existence and characterization results about the value of a Dynkin game (see e.g. Alvarez (2008), Ekstr\"{o}m (2006), Ekstr\"{o}m and Villeneuve (2006), Ekstr\"{o}m and Peskir (2008), Peskir (2008)).
With respect to game options,  Kuhn, Kyprianou, and van Schaik (2007) recently extended valuation results in a complete market framework to include more general payoffs than those considered in Kifer (2000).  Recent results in an incomplete market setting include Kuhn (2004), and Hamad\`{e}ne and Zhang (2008). \\
\indent Since
game-type derivatives are generalized American-style options, explicit solutions are rare in many settings. 
However, Kyprianou (2004) explicitly solved, under
the Black-Scholes framework, the valuation problem associated to a
particular game-type derivative known as the perpetual Israeli
$\delta$-penalty put option.  This analysis was limited to a put
option on a non-dividend paying asset following geometric brownian
motion. Within this framework, Kyprianou found that the strike price
was the only asset value for which optimal contract termination
would occur. This result for the put option is intuitive and, perhaps, suggests similar behavior
for its call option counterpart. Following that article, Kuhn and
Kyprianou (2007) addressed the finite expiry put option valuation
problem and found its explicit representation as a compound exotic
option.  In the following discussion, we consider the valuation
problem of a perpetual game call option on a dividend
paying asset. Recently, Kunita and Seko (2004) considered the finite
expiry version of this contract.  Here, we utilize some of the same
arguments while attempting to explicitly solve the valuation
problem.  In doing so, we find significant qualitative differences
with Kunita and Seko's finite expiry analysis and important
distinctions from the work done by Kyprianou (2004) on an infinite
expiry game put option with a
non-dividend paying asset.  Most recently, Alvarez (2009) explicitly characterized both the value and the optimal exercise policy of a 
minimum guaranteed payment game option when the underlying asset price 
follows a general linear, time homogeneous diffusion.  The payoff structure Alvarez (2009) considered is, indeed, very
similar to a game call option since the payoff of the former upon exercise by the holder is $\max(X-K_{1},0)+K_{2}$ where $K_{1}=K_{2}>0$.  Our analysis here is distinct from Alvarez (2009) since a regular call option payoff assumes $K_{1}\geq0$ and $K_{2}=0$.  We find that this slight
parameter difference significantly changes the solution to the optimal stopping problem even in the typical case when the underlying dynamics follow geometric brownian motion.

\indent The forthcoming discussion is organized as follows. Section
\ref{CanCallSetup} describes the economic setting and presents a few foundational valuation results. Section
\ref{val-dr} addresses the valuation problem when $r\leq d$.  Section \ref{val-rd} examines valuation when $r>d$.  Section \ref{CanNumRes}
presents results from a numerical approximation of the optimal
exercise and cancellation boundaries.  Section
\ref{CanConclusion} concludes the valuation discussion.  Section \ref{Appendix} elaborates on a few claims from prior sections.   

\section{Setup}
\label{CanCallSetup} The economic setting is the standard financial
market with constant coefficients.  We assume the underlying asset
process follows the geometric Brownian Motion process whose price
satisfies
\begin{eqnarray}
\begin{split}
\label{perCallGBM} \mathrm{d}X_{t}=(r-d)X_{t}\mathrm{d}t+\sigma
X_{t}\mathrm{d}W_{t}
\end{split}
\end{eqnarray}
where $r$ is the risk-free rate of interest assumed to be strictly
positive, $d$ is the dividend rate on the underlying asset assumed
to be non-negative, and $\sigma$ is the volatility of the asset's
return assumed to be strictly positive.  The dynamics in
(\ref{perCallGBM}) describe the risk-neutralized evolution of the
underlying asset process.  The process $W$ is a Brownian motion
under the risk-neutral measure $\mathbb{P}$.  \\
\indent Let $V^{*}(X_{t})$ denote the value at $t$ of a perpetual call
option with a cancellation feature available to the short side of
the contract with penalty $\delta$.  That is, the payoff to the
holder upon cancellation when $X_{t}=x$ is
$(x-K)^{+}+\delta$.  We will refer to this contract as a perpetual
$\delta$-penalty call option or simply a $\delta$-penalty call
option.  If the holder exercises with strategy $\sigma$ and the writer cancels with strategy $\tau$, then payoff 
to the holder of the contract is $Z_{\sigma,\tau}$ where
\begin{eqnarray}
Z_{s,t}:=(X_{s}-K)^{+}1_{\{s\leq t\}}+((X_{t}-K)^{+}+\delta)1_{\{t<s\}}
\end{eqnarray}

Please note that we denote both the volatility of the geometric brownian motion and the holder's exercise stopping time by $\sigma$.  In the sequel, it will be clear by the context as to which quantity $\sigma$ references.
Standard results (see e.g. Kyprianou (2004)) can be invoked to establish that the value
of the $\delta$-penalty call option is
\begin{equation}
\label{optstopping}
\begin{split}
V^{*}(x)&=\inf_{\tau\in\mathcal{S}_{0,\infty}}\sup_{\sigma\in
\mathcal{S}_{0,\infty}}\mathbb{E}[e^{-r(\sigma\wedge\tau)}\{((xN_{\tau}-K)^{+}+\delta)1_{\{\tau<\sigma\}}
+ (xN_{\sigma}-K)^{+}1_{\{\sigma\leq \tau\}}\}]\\
&=\sup_{\sigma\in\mathcal{S}_{0,\infty}}\inf_{\tau\in
\mathcal{S}_{0,\infty}}\mathbb{E}[e^{-r(\sigma\wedge\tau)}\{((xN_{\tau}-K)^{+}+\delta)1_{\{\tau<\sigma\}}
+ (xN_{\sigma}-K)^{+}1_{\{\sigma\leq \tau\}}\}]\\
\end{split}
\end{equation}
where 
\begin{equation}
N_{t}:=\exp{\{(r-d-\frac{\sigma^{2}}{2})t+\sigma W_{t}\}}
\end{equation}
with optimal exercise strategies for the holder and writer
respectively equal to
\begin{eqnarray}
\label{dopt-times}
\begin{split}
\sigma^{*}&=\inf{\{t\in[0,\infty):V^{*}_{t}=(X_{t}-K)^{+}\}}\\
\tau^{*}&=\inf{\{t\in[0,\infty):V^{*}_{t}=(X_{t}-K)^{+}+\delta\}}
\end{split}
\end{eqnarray}
where $\inf{\{\emptyset\}}=\infty$, by convention.  We shall adopt this convention throughout the entire paper.  Note $\mathcal{S}_{0,\infty}$
denotes the set of all stopping times of the Brownian filtration,
and $\mathbb{E}$ is the expectation under the risk-neutral measure
$\mathbb{P}$.  In addition, let $\mathbb{E}_{x}$ denote the
expectation under $\mathbb{P}$ such that $X_{0}=x$.\\
\indent We begin our discussion with a regularity result for the value
function of the $\delta$-penalty call option.  
\begin{prpn}
\label{perpLipschitz} The value function is non-decreasing in $x$
and is Lipschitz continuous with Lipschitz constant $1$.
\end{prpn}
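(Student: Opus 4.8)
The plan is to work directly from the saddle-point representation \eqref{optstopping}, exploiting the fact that the payoff functions $x\mapsto(x-K)^{+}$ and $x\mapsto(x-K)^{+}+\delta$ are each non-decreasing and $1$-Lipschitz, and that the multiplicative noise $N_{t}$ does not depend on the starting point $x$. First I would establish monotonicity: fix $x\le y$ and a common pair of stopping times $(\sigma,\tau)$. Since $N_{t}\ge0$ and $u\mapsto(u-K)^{+}$ is non-decreasing, we have $xN_{t}\le yN_{t}$ and hence $(xN_{t}-K)^{+}\le(yN_{t}-K)^{+}$ pointwise for every $t$, on each of the events $\{\tau<\sigma\}$ and $\{\sigma\le\tau\}$; adding the constant $\delta$ on the first event preserves the inequality. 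Taking expectations gives the pointwise-in-$(\sigma,\tau)$ domination of the integrands, and since $\inf\sup$ (or $\sup\inf$) is monotone in the integrand, $V^{*}(x)\le V^{*}(y)$.

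For the Lipschitz bound I would again fix $x\le y$ and bound the difference of the two saddle-point expressions. The key elementary inequality is that for any $a\le b$ and any $c\ge0$,
\begin{equation*}
0\le (bc-K)^{+}-(ac-K)^{+}\le (b-a)c,
\end{equation*}
which holds because $u\mapsto(u-K)^{+}$ is $1$-Lipschitz. Apply this with $c=N_{t}$; the additive penalty $\delta$ cancels in the difference on $\{\tau<\sigma\}$. Thus for every fixed $(\sigma,\tau)$ the difference of integrands is bounded above by $(y-x)N_{\sigma\wedge\tau}$ and below by $0$. Since $\mathbb{E}[e^{-r(\sigma\wedge\tau)}N_{\sigma\wedge\tau}]\le1$ — this is the standard fact that $e^{-rt}N_{t}=e^{-dt}\exp\{-\tfrac{\sigma^{2}}{2}t+\sigma W_{t}\}$ is a non-negative supermartingale (it is $e^{-dt}$ times an exponential martingale), so by optional stopping its expectation at any stopping time is at most its initial value $1$ — we obtain $0\le \Phi_{x}(\sigma,\tau)\le \Phi_{y}(\sigma,\tau)\le \Phi_{x}(\sigma,\tau)+(y-x)$ where $\Phi_{x}(\sigma,\tau)$ denotes the expectation inside \eqref{optstopping}.

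Finally I would pass from this uniform two-sided bound on $\Phi$ to the same bound on $V^{*}$ using the general principle that if $F\le G\le F+c$ pointwise then $\inf_{\tau}\sup_{\sigma}F\le\inf_{\tau}\sup_{\sigma}G\le\inf_{\tau}\sup_{\sigma}F+c$ (and likewise for $\sup\inf$), which is immediate from monotonicity and translation-invariance of $\inf$ and $\sup$. This yields $0\le V^{*}(y)-V^{*}(x)\le y-x$, i.e. non-decreasing and $1$-Lipschitz. The only point requiring any care — and the one I would state carefully rather than wave at — is the supermartingale/optional-stopping step giving $\mathbb{E}[e^{-r(\sigma\wedge\tau)}N_{\sigma\wedge\tau}]\le1$ for an arbitrary (possibly infinite) stopping time; this follows from Fatou's lemma applied to $e^{-r(t\wedge\sigma\wedge\tau)}N_{t\wedge\sigma\wedge\tau}$ together with the fact that on $\{\sigma\wedge\tau=\infty\}$ the discount $e^{-r(\sigma\wedge\tau)}$ kills the contribution (since $r>0$ and $e^{-dt}$ times an a.s.-convergent positive martingale stays bounded in expectation). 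Everything else is routine.
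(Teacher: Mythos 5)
Your proof is correct, and for the Lipschitz half it takes a genuinely different route from the paper's. The paper proves monotonicity exactly as you do, but for the Lipschitz bound it fixes a single pair of candidate optimal hitting times and sandwiches $V^{*}(y)-V^{*}(x)$ between the corresponding values of $J^{y}-J^{x}$ at that one pair; the core estimate $J^{y}(\sigma,\tau)-J^{x}(\sigma,\tau)\le (y-x)\,\mathbb{E}[e^{-r(\sigma\wedge\tau)}N_{\sigma\wedge\tau}]\le y-x$ is then identical to yours, resting on the same two facts (the payoff is $1$-Lipschitz and the discounted dividend-paying asset is a nonnegative supermartingale). You instead establish the two-sided bound $0\le \Phi_{y}(\sigma,\tau)-\Phi_{x}(\sigma,\tau)\le y-x$ uniformly over \emph{all} pairs $(\sigma,\tau)$ and transfer it to $V^{*}$ by monotonicity and translation-invariance of $\inf_{\tau}\sup_{\sigma}$. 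What your route buys: it requires no existence of saddle-point stopping times and no care about which player's optimizer must be paired with which starting point --- a point on which the paper's write-up in fact stumbles, since pairing $\sigma_{x}$ (holder-optimal for $x$) with $\tau_{y}$ (writer-optimal for $y$) yields $V^{*}(y)\ge J^{y}(\sigma_{x},\tau_{y})$ and $V^{*}(x)\le J^{x}(\sigma_{x},\tau_{y})$, the wrong directions for the upper bound; the intended pairing is $\sigma_{y}$ with $\tau_{x}$. You are also right that the only delicate step is optional sampling at a possibly infinite stopping time, and your Fatou argument together with $e^{-rt}N_{t}\to 0$ a.s.\ closes it cleanly.
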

\begin{proof}
Recall,
\begin{eqnarray*}
V^{*}(x)=\inf_{\tau\in\mathcal{S}_{0,\infty}}\sup_{\sigma\in
\mathcal{S}_{0,\infty}}\mathbb{E}[e^{-r(\sigma\wedge\tau)}\{((xN_{\tau}-K)^{+}+\delta)1_{\{\tau<\sigma\}}
+ (xN_{\sigma}-K)^{+}1_{\{\sigma\leq \tau\}}\}]
\end{eqnarray*}
Using the fact that $(x-K)^{+}$ is a non-decreasing function of $x$
and the definition
\begin{equation}
\begin{split}
J^{x}(\sigma,\tau)&:=\mathbb{E}[e^{-r(\sigma\wedge\tau)}\{((xN_{\tau}-K)^{+}+\delta)1_{\{\tau<\sigma\}}
+ (xN_{\sigma}-K)^{+}1_{\{\sigma\leq \tau\}}\}]\\
\end{split}
\end{equation}
we have $J^{x}(\sigma,\tau) \leq J^{y}(\sigma,\tau)$ for any
$\sigma$, $\tau\in \mathcal{S}_{0,\infty}$.  This implies $V^{*}(x)\leq
V^{*}(y)$ any $x<y$, i.e. $V$ is non-decreasing in $x$.  Now with the
following definitions
\begin{equation}
\begin{split}
\sigma_{x}&:=\inf{\{t\geq 0: V^{*}(xN_{t})=(xN_{t}-K)^{+}\}}\\
\tau_{y}&:=\inf{\{t\geq 0: V^{*}(yN_{t})=(yN_{t}-K)^{+}+\delta\}}
\end{split}
\end{equation}
and using the standard convention that $\inf{\{\emptyset\}}=\infty$,
we have for $x<y$
\begin{equation}
\begin{split}
V^{*}(y)&\leq J^{y}(\sigma_{x},\tau_{y})\\
V^{*}(x)&\geq J^{x}(\sigma_{x},\tau_{y})
\end{split}
\end{equation}
The following sequence of relations hold.

\begin{equation}
\begin{split}
V^{*}(y)-V^{*}(x)&\leq
J^{y}(\sigma_{x},\tau_{y})-J^{x}(\sigma_{x},\tau_{y})\\
&=\mathbb{E}[e^{-r(\sigma_{x}\wedge\tau_{y})}\{((yN_{\tau_{y}}-K)^{+}+\delta)1_{\{\tau_{y}<\sigma_{x}\}}
+ (yN_{\sigma_{x}}-K)^{+}1_{\{\sigma_{x}\leq \tau_{y}\}}\}]\\
& \quad
-\mathbb{E}[e^{-r(\sigma_{x}\wedge\tau_{y})}\{((xN_{\tau_{y}}-K)^{+}+\delta)1_{\{\tau_{y}<\sigma_{x}\}}
+ (xN_{\sigma_{x}}-K)^{+}1_{\{\sigma_{x}\leq \tau_{y}\}}\}]\\
&=\mathbb{E}[e^{-r(\sigma_{x}\wedge\tau_{y})}\{(yN_{\tau_{y}\wedge
\sigma_{x}}-K)^{+}-(xN_{\tau_{y} \wedge \sigma_{x}}-K)^{+}\}]\\
&\leq
\mathbb{E}[e^{-r(\sigma_{x}\wedge\tau_{y})}\{((y-x)(N_{\tau_{y}\wedge
\sigma_{x}})\}]\\
&=(y-x)\mathbb{E}[e^{-r(\sigma_{x}\wedge\tau_{y})}\{(N_{\tau_{y}
\wedge \sigma_{x}})\}] \\
&\leq y-x
\end{split}
\end{equation}
Note the final inequality holds since the discounted price of the
dividend paying asset is a $\mathbb{P}$-supermartingale.  Thus, $V^{*}$ is
Lipschitz continuous with Lipschitz constant $1$.  Note we have
shown, $0\leq V^{*}_{x}\leq 1$.
\end{proof}

The following notation will be utilized throughout the rest of the paper. Let
\begin{eqnarray}
\begin{split}
\lambda&:=\sqrt{2r+\left(\frac{r-d-\frac{\sigma^{2}}{2}}{\sigma}\right)^{2}}\\
\kappa &:=\frac{r-d-\frac{\sigma^{2}}{2}}{\sigma^{2}}\\
\end{split}
\end{eqnarray}
Our first valuation result identifies an upper bound on the penalty
for early cancellation.  More precisely, penalty values chosen above
this upper bound yield a $\delta$-penalty call option value exactly
equal to a perpetual call option since cancellation is not optimal.

\begin{prpn}
Let $v^{c}(x)$ denote the value of the perpetual call option on a
dividend paying asset at current level $x$ (see Section 2.6 Karatzas, Shreve (1998)).  Further, let
\begin{eqnarray}
\begin{split}
\delta^{*}:=v^{c}(K)=(b-K)\left(\frac{K}{b}\right)^{\frac{\lambda}{\sigma}-\kappa};
\ \text{where} \ b:=\frac{\frac{\lambda}{\sigma}-\kappa
}{\frac{\lambda}{\sigma}-\kappa-1}K.
\end{split}
\end{eqnarray}
If $\delta > \delta^{*}$, then the perpetual Israeli
$\delta$-penalty call option is precisely an American call option.
In other words, it is never optimal for the writer to cancel the
contract.
\end{prpn}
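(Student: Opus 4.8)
The plan is to establish the two statements $V^{*}\equiv v^{c}$ and $\tau^{*}\equiv\infty$, which together say precisely that the $\delta$-penalty call is the perpetual American call. I would obtain the identity $V^{*}=v^{c}$ by a two-sided comparison. The bound $V^{*}(x)\le v^{c}(x)$ is immediate from (\ref{optstopping}): letting the writer choose $\tau\equiv\infty$ collapses the game to the optimal stopping problem $\sup_{\sigma\in\mathcal{S}_{0,\infty}}\mathbb{E}_{x}[e^{-r\sigma}(X_{\sigma}-K)^{+}]=v^{c}(x)$ (with the convention $e^{-r\infty}=0$), so that $V^{*}(x)=\inf_{\tau}\sup_{\sigma}J^{x}(\sigma,\tau)\le\sup_{\sigma}J^{x}(\sigma,\infty)=v^{c}(x)$.

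For the reverse inequality, let the holder commit to $\sigma^{c}:=\inf\{t\ge0:X_{t}\ge b\}$, the optimal exercise time for the perpetual American call, and show $J^{x}(\sigma^{c},\tau)\ge v^{c}(x)$ for every $\tau\in\mathcal{S}_{0,\infty}$; taking first the infimum over $\tau$ and then the supremum over holder strategies gives $V^{*}(x)\ge v^{c}(x)$. Two ingredients are used. The first is the pointwise estimate $v^{c}(y)\le(y-K)^{+}+\delta^{*}$ for all $y\ge0$. To prove it I would invoke the explicit form of $v^{c}$ (Karatzas, Shreve (1998)): with $\gamma:=\tfrac{\lambda}{\sigma}-\kappa$ (which exceeds $1$ precisely because $d>0$) one has $v^{c}(y)=(b-K)(y/b)^{\gamma}$ on $[0,b]$, $v^{c}(y)=y-K$ on $[b,\infty)$, and $b=\tfrac{\gamma}{\gamma-1}K$; then $h(y):=v^{c}(y)-(y-K)^{+}$ is increasing on $[0,K]$ (there $h=v^{c}$), satisfies $h'(y)=(y/b)^{\gamma-1}-1\le0$ on $[K,b]$ (using $\gamma(b-K)=b$), and vanishes on $[b,\infty)$, so $\max_{y\ge0}h(y)=h(K)=v^{c}(K)=\delta^{*}$. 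The second ingredient is that $M_{t}:=e^{-r(t\wedge\sigma^{c})}v^{c}(X_{t\wedge\sigma^{c}})$ is a uniformly integrable martingale: $v^{c}$ solves $\tfrac{1}{2}\sigma^{2}y^{2}(v^{c})''+(r-d)y(v^{c})'-rv^{c}=0$ on $(0,b)$ with a $C^{1}$ fit at $b$, so by It\^{o}'s formula $e^{-rt}v^{c}(X_{t})$ is a local martingale up to $\sigma^{c}$, and since $0\le v^{c}(X_{t\wedge\sigma^{c}})\le v^{c}(b)=b-K$ it is bounded, hence a true martingale obeying optional sampling at any (possibly infinite) stopping time.

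Granting these two ingredients, I would compare $J^{x}(\sigma^{c},\tau)$ with $\mathbb{E}_{x}[M_{\tau}]$ pathwise. On $\{\tau<\sigma^{c}\}$ the holder's payoff at $\tau$ equals $e^{-r\tau}((X_{\tau}-K)^{+}+\delta)\ge e^{-r\tau}v^{c}(X_{\tau})$ by the first ingredient together with $\delta>\delta^{*}$; on $\{\sigma^{c}\le\tau,\ \sigma^{c}<\infty\}$ it equals $e^{-r\sigma^{c}}(X_{\sigma^{c}}-K)^{+}=e^{-r\sigma^{c}}v^{c}(X_{\sigma^{c}})$, using path continuity (so $X_{\sigma^{c}}=b$) and $v^{c}(y)=(y-K)^{+}$ for $y\ge b$; and on $\{\sigma^{c}=\tau=\infty\}$ one has $X_{t}<b$ for all $t$, so $M_{t}=e^{-rt}v^{c}(X_{t})\le e^{-rt}(b-K)\to0=M_{\infty}$, while the holder's payoff is $0$ by the convention $e^{-r\infty}=0$. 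Hence $J^{x}(\sigma^{c},\tau)\ge\mathbb{E}_{x}[e^{-r(\sigma^{c}\wedge\tau)}v^{c}(X_{\sigma^{c}\wedge\tau})]=\mathbb{E}_{x}[M_{\tau}]=M_{0}=v^{c}(x)$, and combining the two bounds yields $V^{*}\equiv v^{c}$.

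Since $V^{*}\equiv v^{c}$ we have $V^{*}_{t}=v^{c}(X_{t})$, and for $\delta>\delta^{*}$ the first ingredient holds strictly, $v^{c}(y)<(y-K)^{+}+\delta$ for every $y\ge0$; thus $\{t:V^{*}_{t}=(X_{t}-K)^{+}+\delta\}=\emptyset$, so $\tau^{*}=\infty$ and the writer never cancels, i.e. the contract coincides with the perpetual American call. The step I expect to be most delicate is not the comparison itself but the infinite-horizon bookkeeping: when $r-d-\tfrac{\sigma^{2}}{2}\le0$ the event $\{\sigma^{c}=\infty\}$ has positive probability, so in place of a finite-horizon optional stopping theorem one must argue convergence of the bounded martingale $M$ and justify optional sampling at a possibly infinite $\tau$. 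Everything else reduces either to the explicit computation behind the first ingredient or to a routine verification argument; in fact the entire statement can equivalently be read off from a Dynkin-game verification lemma with $v^{c}$ as candidate value, the penalty bound being exactly the condition that renders the upper obstacle $(\cdot-K)^{+}+\delta$ inactive.
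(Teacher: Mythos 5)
Your proposal is correct and follows essentially the same route as the paper: both arguments sandwich $v^{c}$ between the upper and lower game values using the obstacle inequality $(x-K)^{+}\leq v^{c}(x)\leq (x-K)^{+}+\delta$ together with the $r$-harmonicity of $v^{c}$ on $(0,b)$, with $\tau=\infty$ and $\sigma=\sigma_{b}$ as the competing strategies. You merely organize the paper's single chain of inequalities as two one-sided bounds and supply more detail on points the paper asserts briefly (the maximum of $v^{c}-(\cdot-K)^{+}$ at $K$ via the closed form rather than via $0\leq v^{c}_{x}\leq 1$, and the uniformly integrable martingale/optional sampling bookkeeping on $\{\sigma_{b}=\infty\}$).
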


\begin{proof}
Suppose $\delta>\delta^{*}$.  Since $v^{c}(x)$ is an increasing
function of $x$ with derivative satisfying $0\leq v^{c}_{x}\leq 1$,
it follows that
\begin{eqnarray}
\begin{split}
(x-K)^{+}\leq v^{c}(x)\leq (x-K)^{+}+\delta
\end{split}
\end{eqnarray}
The following sequence of relations establishes the fact that the
$\delta$-penalty call option is simply an American call option. Note
$b$ denotes the optimal exercise boundary value for the American
call option and $\sigma_{x}:=\inf{\{t\geq 0: X_{t}=x\}}$.
\begin{equation}
\begin{split}
v^{c}(x)&= \inf_{\tau \in
\mathcal{S}_{0,\infty}}\mathbb{E}_{x}[e^{-r(\tau \wedge
\sigma_{b})}v^{c}(X_{\tau \wedge \sigma_{b}})]\\
&\leq \inf_{\tau \in \mathcal{S}_{0,\infty}}
\mathbb{E}_{x}[e^{-r(\tau \wedge \sigma_{b})}
\left((X_{\sigma_{{b}}}-K)^{+}1_{\{\sigma_{b}\leq
\tau\}}+((X_{\tau}-K)^{+}+\delta)1_{\{\tau<\sigma_{b}\}}\right)]\\
&\leq \sup_{\sigma \in \mathcal{S}_{0,\infty}}\inf_{\tau \in
\mathcal{S}_{0,\infty}} \mathbb{E}_{x}[e^{-r(\tau \wedge \sigma)}
\left((X_{\sigma}-K)^{+}1_{\{\sigma \leq
\tau\}}+((X_{\tau}-K)^{+}+\delta)1_{\{\tau<\sigma\}}\right)]\\
&\leq \sup_{\sigma \in \mathcal{S}_{0,\infty}}\mathbb{E}_{x}[e^{-r
\sigma}
(X_{\sigma}-K)^{+}]\\
&=v^{c}(x)
\end{split}
\end{equation}
The first equality follows since $v^{c}(x)$ is $r$-harmonic on $(0,b)$.  The first inequality follows since
$(s-K)^{+}\leq v^{c}(s)\leq (s-K)^{+}+\delta$ holds for all $s \in
(0,\infty)$. The second inequality follows by definition of the
supremum. The third inequality holds by definition of the infimum
and setting $\tau=\infty$. Note the order of the supremum and the
infimum in the second inequality can be reversed by starting from
the right-hand side and reasoning towards the left-hand side. Thus,
a saddle point occurs at $\sigma^{*}=\sigma_{b}$ and
$\tau^{*}=\infty$.
\end{proof}

\section{Valuation when $r \leq d$}
\label{val-dr} 
In this section, we wish to identify the value function of the $\delta$-penalty call option when the non-negative interest rate is bounded above by the dividend rate.  
The following theorem represents the main result of this section. 
\begin{thm}
Suppose $r\leq d$.  For $0<\delta \leq \delta^{*}$, the perpetual
$\delta$-penalty call option has value process $V(X_{t})$
where
\begin{equation}
\begin{split}
V(x)=
\begin{cases}
x-K & \textrm{if} \ x \in [k^{*},\infty)\\
 (k^{*}-K)\left(\frac{k^{*}}{x}\right)^{\kappa}\frac{\left(\frac{K}{x}\right)
 ^{-\frac{\lambda}{\sigma}}-\left(\frac{K}{x}\right)
 ^{\frac{\lambda}{\sigma}}}{\left(\frac{k^{*}}{K}\right)
 ^{\frac{\lambda}{\sigma}}-\left(\frac{k^{*}}{K}\right)
 ^{-\frac{\lambda}{\sigma}}}+\delta\left(\frac{K}{x}\right)^{\kappa}
 \frac{\left(\frac{k^{*}}{x}\right)
 ^{\frac{\lambda}{\sigma}}-\left(\frac{k^{*}}{x}\right)
 ^{-\frac{\lambda}{\sigma}}}{\left(\frac{k^{*}}{K}\right)
 ^{\frac{\lambda}{\sigma}}-\left(\frac{k^{*}}{K}\right)
 ^{-\frac{\lambda}{\sigma}}}
& \textrm{if} \ x \in (K,k^{*})\\
\delta \left(\frac{x}{K}\right)^{\frac{\lambda}{\sigma}-\kappa} & \textrm{if} \ x \in (0,K] \\
\end{cases}
\end{split}
\end{equation}
and the optimal exercise and cancellation strategies are
$\sigma^{*}:=\inf{\{t\geq 0: X_{t}\geq k^{*}\}}$ and
$\tau^{*}:=\inf{\{t\geq 0: X_{t}=K\}}$ where $k^{*}$ satisfies the
equation
\begin{eqnarray}
\begin{split}
&\left(\frac{k^{*}}{K}\right)^{\frac{2\lambda}{\sigma}}
\left(-2\left(\frac{K}{k^{*}}\right)^{\kappa+\frac{\lambda}{\sigma}}\delta
\lambda + (k^{*}-K)\left(\lambda-\kappa
\sigma+\left(\frac{K}{k^{*}}\right)^{\frac{2\lambda}{\sigma}}(\lambda+\kappa
\sigma)\right)\right)\\
& \quad
=k^{*}\left(-1+\left(\frac{k^{*}}{K}\right)^{\frac{2\lambda}{\sigma}}\right)\sigma.
\end{split}
\end{eqnarray}
\label{prop1}
\end{thm}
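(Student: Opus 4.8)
The plan is to prove Theorem~\ref{prop1} by a verification argument in the spirit of Kyprianou (2004). Let $\widehat V$ denote the function defined by the three-branch formula in the statement. I would first check that $\widehat V$ is well defined and has the correct local structure, then verify a system of variational inequalities for it, and finally use these to show that $(\sigma^{*},\tau^{*})$ is a saddle point of the Dynkin game in (\ref{optstopping}), so that $V^{*}=\widehat V$. Write $\mathcal L f(x):=\tfrac12\sigma^{2}x^{2}f''(x)+(r-d)xf'(x)$ for the generator of (\ref{perCallGBM}); the equation $\mathcal L f-rf=0$ is equidimensional with fundamental solutions $x^{\gamma_{+}},x^{\gamma_{-}}$, where $\gamma_{\pm}=-\kappa\pm\lambda/\sigma$, and because $r>0$ and $d\ge r$ one has $\gamma_{-}<0<1\le\gamma_{+}$ and $\gamma_{+}=\lambda/\sigma-\kappa$, the exponent in the lowest branch. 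On $(0,K]$ the holder's obstacle $(x-K)^{+}$ vanishes and $\delta(x/K)^{\gamma_{+}}$ is the unique solution of $\mathcal Lu-ru=0$ bounded as $x\downarrow0$ with $u(K)=\delta$; it equals $\delta\,\mathbb E_{x}[e^{-r\tau_{K}}]$ for $\tau_{K}$ the first hitting time of $K$. On $(K,k^{*})$, $\widehat V$ is the unique combination $Ax^{\gamma_{+}}+Bx^{\gamma_{-}}$ with $\widehat V(K)=\delta$ and $\widehat V(k^{*})=k^{*}-K$; a direct computation identifies it with the middle branch. The level $k^{*}$ is then determined by smooth fit at the holder's exercise threshold, $\widehat V'(k^{*}-)=1$, which upon clearing denominators is precisely the transcendental equation in the statement.

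Next I would establish the structural facts underlying the formula. Using $0<\delta\le\delta^{*}=v^{c}(K)$, the smooth-fit equation has a unique root $k^{*}\in(K,b]$: rewritten as $G(k)=0$ for a function $G$ assembled from $x^{\gamma_{\pm}}$, $G$ is continuous and changes sign on $(K,b]$, with the boundary case $\delta=\delta^{*}$ degenerating to $k^{*}=b$ (consistent with the preceding proposition, where the contract becomes an ordinary American call). By construction $\widehat V$ is $C^{2}$ on each of $(0,K),(K,k^{*}),(k^{*},\infty)$ and $C^{1}$ at $k^{*}$. The crucial remaining point is that at $x=K$ the value has a \emph{convex} kink: a short computation gives $\widehat V'(K+)-\widehat V'(K-)=\frac{2\lambda/\sigma}{KD}\big[(k^{*}-K)(K/k^{*})^{\gamma_{+}}-\delta\big]$ with $D:=(k^{*}/K)^{\lambda/\sigma}-(K/k^{*})^{\lambda/\sigma}>0$, so convexity of the kink is equivalent to $(k^{*}-K)(K/k^{*})^{\gamma_{+}}\ge\delta$; recalling that $x\mapsto(x-K)(K/x)^{\gamma_{+}}$ rises from $0$ to its maximum $\delta^{*}$ at $x=b$, this says $k^{*}$ lies to the right of the smaller root of $(x-K)(K/x)^{\gamma_{+}}=\delta$, which I would read off the transcendental equation. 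Granting this, $\widehat V''(K+)>0$ (from the ODE, since $\widehat V'(K+)\ge\delta\gamma_{+}/K>0$ and $r\le d$), and since $\widehat V''(k^{*}-)=\tfrac{2(dk^{*}-rK)}{\sigma^{2}(k^{*})^{2}}>0$ while $x^{2-\gamma_{-}}\widehat V''(x)$ is monotone on $(K,k^{*})$, one gets $\widehat V''>0$ on $(K,k^{*})$; hence $\widehat V$ is convex on all of $(0,\infty)$ (it is exactly here that $r\le d$ enters essentially — cf. the failure of convexity when $r>d$ noted in the abstract) and $0\le\widehat V'\le1$.

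With this in hand the variational inequalities are routine. By construction $\mathcal L\widehat V-r\widehat V=0$ on $(0,K)\cup(K,k^{*})$; on $(k^{*},\infty)$, $\widehat V(x)=x-K$ gives $\mathcal L\widehat V-r\widehat V=rK-dx\le0$ since $rK\le dK<dx$ for $x>k^{*}>K$. For the obstacles, $(x-K)^{+}\le\widehat V\le(x-K)^{+}+\delta$ holds trivially on $(0,K]$ and $[k^{*},\infty)$; on $(K,k^{*})$ the lower bound follows from convexity ($\widehat V'\le\widehat V'(k^{*}-)=1$, so $\widehat V(x)=(k^{*}-K)-\int_{x}^{k^{*}}\widehat V'\ge x-K$), and the upper bound from a minimum principle applied to $p:=(x-K)+\delta-\widehat V$, which satisfies $(\mathcal L-r)p=rK-dx<0$, $p(K)=0$, $p(k^{*})=\delta>0$, hence admits no negative interior minimum. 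Consequently the contact sets (stopping regions) are exactly $\{\widehat V=(x-K)^{+}\}=[k^{*},\infty)$ and $\{\widehat V=(x-K)^{+}+\delta\}=\{K\}$, which identifies them with $\sigma^{*}$ and $\tau^{*}$.

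Finally, the saddle-point step. Fix $x$ and an arbitrary holder time $\sigma$; up to $\sigma\wedge\tau^{*}$ the stopped path stays in a set where $\widehat V\in C^{1}$ off $K$ and $\mathcal L\widehat V-r\widehat V\le0$, so It\^o's formula makes $e^{-r(t\wedge\sigma\wedge\tau^{*})}\widehat V(X_{t\wedge\sigma\wedge\tau^{*}})$ a supermartingale. Letting $t\to\infty$ (Fatou, using that $e^{-rt}\widehat V(X_{t})\to0$ almost surely and $\widehat V\ge0$), and using $\widehat V(X_{\tau^{*}})=\widehat V(K)=\delta=(X_{\tau^{*}}-K)^{+}+\delta$ together with $\widehat V\ge(x-K)^{+}$, gives $\widehat V(x)\ge J^{x}(\sigma,\tau^{*})$, hence $\widehat V(x)\ge\sup_{\sigma}J^{x}(\sigma,\tau^{*})$. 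Symmetrically, for an arbitrary writer time $\tau$, up to $\sigma^{*}\wedge\tau$ the path stays in $(0,k^{*}]$, where $\mathcal L\widehat V-r\widehat V=0$ off $\{K\}$ and the convex kink at $K$ contributes a nonnegative local-time term in the It\^o--Tanaka--Meyer formula, so $e^{-r(t\wedge\sigma^{*}\wedge\tau)}\widehat V(X_{t\wedge\sigma^{*}\wedge\tau})$ is a submartingale; letting $t\to\infty$ (here $0<r\le d$ forces $d>0$, making $\sup_{t}e^{-rt}X_{t}$ integrable, so reverse Fatou applies) and using $\widehat V(X_{\sigma^{*}})=k^{*}-K=(X_{\sigma^{*}}-K)^{+}$ and $\widehat V\le(x-K)^{+}+\delta$, gives $\widehat V(x)\le J^{x}(\sigma^{*},\tau)$, hence $\widehat V(x)\le\inf_{\tau}J^{x}(\sigma^{*},\tau)$. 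Comparing with the two representations of $V^{*}$ in (\ref{optstopping}) forces equality throughout, so $V^{*}=\widehat V$ and $(\sigma^{*},\tau^{*})$ is a saddle point, as claimed. The step I expect to require real work is the one flagged above: showing the transcendental smooth-fit equation has a unique solution $k^{*}\in(K,b]$ for each $0<\delta\le\delta^{*}$ and deducing the convex-kink inequality $(k^{*}-K)(K/k^{*})^{\gamma_{+}}\ge\delta$, from which the convexity of $\widehat V$ and the obstacle bounds all follow; the ODE bookkeeping and the sub/supermartingale argument, including the Tanaka correction at the kink, are routine, and some of these computational verifications are of the kind naturally deferred to the appendix.
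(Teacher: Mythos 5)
Your proposal is correct and follows essentially the same route as the paper: solve the ODE with $v(K)=\delta$ and smooth fit at $k^{*}$, establish the convex kink at $K$ via the inequality $\delta\le(k^{*}-K)\left(K/k^{*}\right)^{\lambda/\sigma-\kappa}$, deduce convexity on $(K,k^{*})$ from $r\le d$ together with $v_{x}\ge0$, and then sandwich the candidate between the sup--inf and inf--sup representations using the sub/supermartingale properties of $e^{-rt}v(X_{t})$ stopped at $\sigma_{k^{*}}$ and $\tau_{K}$. The one step you explicitly defer --- existence/uniqueness of $k^{*}\in(K,b]$ and the kink inequality $(k^{*}-K)(K/k^{*})^{\lambda/\sigma-\kappa}\ge\delta$ --- is precisely what the paper relegates to its appendix (monotonicity of $\delta(k)$ and $f(x)\ge0$ on $[K,b]$), and your sketch of it points in the same direction.
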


The proof of this theorem follows a path similar to the proof of
the value function for the perpetual $\delta$-penalty put option by
Kyprianou (2004).  In that paper, Kyprianou showed that the value
function for the put option is a convex function on $(0,\infty)$
when the penalty satisfies $\delta < v^{p}(K)$; where $v^{p}(K)$ is
the value function of a perpetual American put option on a
non-dividend paying asset when the asset price is equal to the
strike $K$.  When considering a call option on a dividend paying
asset with $r\leq d$, we find that the value function $V$ is also a
convex function on $(0,\infty)$ when the penalty satisfies
$0<\delta<v^{c}(K)$ (see Figure \ref{perCallDelt225RD01}).

\begin{figure}[h!!]
\centering
\epsfig{file=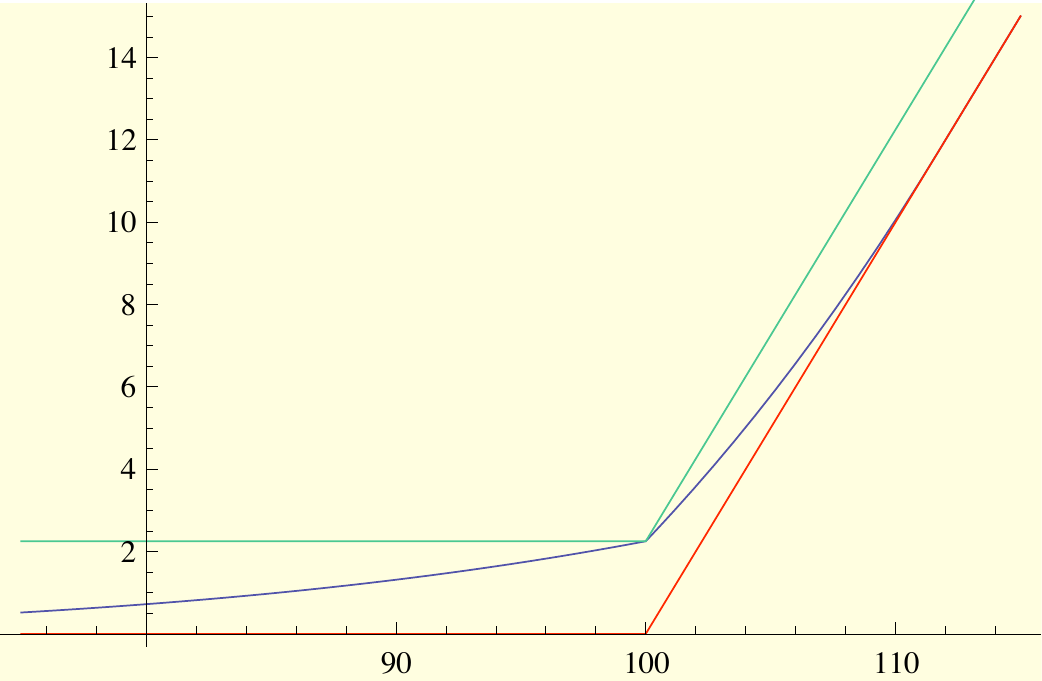,width=0.9\linewidth,clip=}
\caption[This figure displays the value function of the perpetual
$\delta$-penalty call option for $r\leq d$ along with the immediate
exercise and cancellation values.]{This figure displays the value
function of the perpetual $\delta$-penalty call option for $r\leq d$
along with the immediate exercise and cancellation values.  The
penalty is such that value function has increasing derivative at
$K$. Parameter values are $r=0.01, d =0.09$, $\sigma =0.20$,
$K=100$, $k^{*}=111.7641$,  and
$\delta=2.25$.}\label{perCallDelt225RD01}
\end{figure}

\begin{proof}
Suppose $0<\delta < \delta^{*}$.  We propose that the value function is
 $r$-harmonic on the set $(0,K) \cup (K,k^{*})$, satisfies the smooth fit condition at $k^{*}$ 
 and takes the value $\delta$ at  the strike price $K$.  Specifically, consider the boundary value problem
 
 \begin{eqnarray}
 \begin{split}
 \mathcal{L}v(x)=rv(x); \ v(K)=\delta, \ \lim_{x \downarrow 0}v(x)=0, \ \ x\in (0,K)
 \end{split}
 \end{eqnarray}
where $\mathcal{L}:= (r-d)x\frac{\mathrm{d}}{\mathrm{d} x}+\frac{1}{2}\sigma^{2}x^{2}\frac{\mathrm{d}^{2}}{\mathrm{d} x^{2}}$.  Let $v_{y}$ denote the derivative with respect to the parameter $y$.  Solving this problem yields, 

\begin{eqnarray}
\label{dr-delt}
\begin{split}
v(x)&=\delta \left(\frac{x}{K}\right)^{\frac{\lambda}{\sigma}-\kappa} \ \text{for} \ x\in(0,K)
\end{split}
\end{eqnarray}
Now consider the problem 

 \begin{eqnarray}
 \begin{split}
 \mathcal{L}v(x)=rv(x); \ v(K)=\delta, v(k^{*})=(k^{*}-K)^{+}, v_{x}(k^{*})=1, \  x\in (K,k^{*})
 \end{split}
 \end{eqnarray}
The solution to this problem is
\begin{equation}
\label{dr-other}
v(x)=
(k^{*}-K)\left(\frac{k^{*}}{x}\right)^{\kappa}\frac{\left(\frac{K}{x}\right)
 ^{-\frac{\lambda}{\sigma}}-\left(\frac{K}{x}\right)
 ^{\frac{\lambda}{\sigma}}}{\left(\frac{k^{*}}{K}\right)
 ^{\frac{\lambda}{\sigma}}-\left(\frac{k^{*}}{K}\right)
 ^{-\frac{\lambda}{\sigma}}}+\delta\left(\frac{K}{x}\right)^{\kappa}
 \frac{\left(\frac{k^{*}}{x}\right)
 ^{\frac{\lambda}{\sigma}}-\left(\frac{k^{*}}{x}\right)
 ^{-\frac{\lambda}{\sigma}}}{\left(\frac{k^{*}}{K}\right)
 ^{\frac{\lambda}{\sigma}}-\left(\frac{k^{*}}{K}\right)
 ^{-\frac{\lambda}{\sigma}}} \ \text{for} \ x\in(K,k^{*})
\end{equation}
where $k^{*}$ satisfies the following equation 
\begin{equation}
\label{FOC1}
\begin{split}
&\left(\frac{k^{*}}{K}\right)^{\frac{2\lambda}{\sigma}}
\left(-2\left(\frac{K}{k^{*}}\right)^{\kappa+\frac{\lambda}{\sigma}}\delta
\lambda + (k^{*}-K)\left(\lambda-\kappa
\sigma+\left(\frac{K}{k^{*}}\right)^{\frac{2\lambda}{\sigma}}(\lambda+\kappa
\sigma)\right)\right)\\
& \quad
=k^{*}\left(-1+\left(\frac{k^{*}}{K}\right)^{\frac{2\lambda}{\sigma}}\right)\sigma.
\end{split}
\end{equation}
\indent Simple calculations, using the fact that $r\leq d$, show that $v(x)$ is an increasing, convex function on $(0,K)$.  Before establishing that $v(x)$ is an increasing, convex function on $(K,k^{*})$, we first analyze its behavior at the strike price $K$.  The solution $v(x)$ of the boundary value problem is continuous but is not necessarily differentiable at $K$.  The following estimates show that the left-hand derivative is no larger than the right-hand derivative at $K$.  A non-decreasing derivative at $K$ requires 

\begin{eqnarray}
\begin{split}
\frac{2(k^{*}-K)\left(\frac{k^{*}}{K}\right)^{\kappa+\frac{\lambda}{\sigma}}\lambda
-2\left(\frac{k^{*}}{K}\right)^{\frac{2\lambda}{\sigma}}\delta \
\lambda}{\left(-1+\left(\frac{k^{*}}{K}\right)^{\frac{2\lambda}{\sigma}}\right)K
\ \sigma} \geq 0
\end{split}
\end{eqnarray}

Note the denominator is positive since $k^{*}\geq K$ and
$\frac{2\lambda}{\sigma}\geq 0$.  Hence, the derivative will be
increasing at $K$ if the following holds
\begin{eqnarray}
\begin{split}
(k^{*}-K)\left(\frac{k^{*}}{K}\right)^{\kappa+\frac{\lambda}{\sigma}}
-\left(\frac{k^{*}}{K}\right)^{\frac{2\lambda}{\sigma}}\delta &\geq
0 \ \Leftrightarrow \\
(k^{*}-K)\left(\frac{k^{*}}{K}\right)^{\kappa}
-\left(\frac{k^{*}}{K}\right)^{\frac{\lambda}{\sigma}}\delta & \geq
0
\end{split}
\end{eqnarray}
The left-side of this inequality is a decreasing, linear function of
$\delta$.  Thus, the condition on $\delta$ which guarantees the left-hand derivative is no larger than the right-hand derivative at $K$ is 
\begin{eqnarray}
\label{IncDeriv}
\label{deltacondition}
\begin{split}
\delta \leq
(k^{*}-K)\left(\frac{K}{k^{*}}\right)^{\frac{\lambda}{\sigma}-\kappa}
\end{split}
\end{eqnarray}
Interestingly, the assumption $0<\delta < \delta^{*}$ guarantees (\ref{IncDeriv}) holds.  One way in which to see this is to view $\delta$ as a function of $k^{*}$ in 
$(\ref{FOC1})$.  Indeed, the function $\delta(k^{*})$ is a continuous, increasing \footnote{See Section \ref{Appendix} for a justification of this claim} function such that $\delta(K)=0$ and $\delta(b)=v^{c}(K)$.  From this viewpoint and using this information, $(\ref{IncDeriv})$ will hold if
\begin{equation}
\label{fpositive}
\begin{split}
f(x)&:=(x-K)\left(\frac{K}{x}\right)^{\frac{\lambda}{\sigma}-\kappa}\\
&\quad \quad +\frac{\left(\frac{K}{x}\right)^{-\frac{\lambda+\kappa
\sigma}{\sigma}}\left(x\left(1-\left(\frac{x}{K}\right)^{\frac{-2\lambda}{\sigma}}
\right)\sigma-(s-K)\left(\lambda-\kappa\sigma+\left(\frac{K}{x}\right)^{\frac{2\lambda}{\sigma}
}(\lambda+\kappa\sigma)\right)\right)}{2\lambda}\\
&\geq 0, \ x\in [K,b]
\end{split}
\end{equation}

The function $f(x)$ is obtained by substituting the
representation for $\delta$ in terms of $k^{*}$ into
(\ref{deltacondition}) and then subtracting this term from each side
of the inequality.  Details of the proof that $f(x)\geq 0$ for $x\in [K,b]$ are included in Section \ref{Appendix}.  \\
\indent Continuing with the analysis of $v(x)$, its derivative on $(K,k^{*})$ (see formula $(\ref{dr-other})$) is 
\begin{eqnarray}
\label{big-form}
\begin{split}
&\left(\frac{1}{\left(-1+\left(\frac{k^{*}}{K}\right)^{\frac{2\lambda}{\sigma}}\right)x\sigma}\right)
\times \Bigg(\left(\frac{k^{*}}{K}\right)^{\frac{\lambda}{\sigma}}
\left(\frac{k^{*}}{x}\right)^{-\frac{\lambda}{\sigma}}
\left(\frac{K}{x}\right)^{-\frac{\lambda}{\sigma}} \\
& \quad \Bigg(-\left(\frac{K}{x}\right)^{\kappa+\frac{\lambda}{\sigma}}
\delta \left(\lambda-\kappa
\sigma+\left(\frac{k^{*}}{x}\right)^{\frac{2\lambda}{\sigma}}\left(\lambda+\kappa
\sigma \right)\right)\\
& \quad +
(k^{*}-K)\left(\frac{k^{*}}{x}\right)^{\kappa+\frac{\lambda}{\sigma}}
\left(\lambda-\kappa
\sigma+\left(\frac{K}{x}\right)^{\frac{2\lambda}{\sigma}}(\lambda+\kappa
\sigma)\right)\Bigg)\Bigg)
\end{split}
\end{eqnarray}
Note the first line in (\ref{big-form}) above has four factors that
are all positive.  Since $\delta \leq
(k^{*}-K)\left(\frac{K}{k^{*}}\right)^{\frac{\lambda}{\sigma}-\kappa}$, the expression in the remaining two lines of the above
derivative is greater than or equal to 
\begin{eqnarray}
\begin{split}
&(k^{*}-K)\Bigg(-\left(\frac{K}{k^{*}}\right)^{-\kappa+\frac{\lambda}{\sigma}}
\left(\frac{K}{x}\right)^{\kappa+\frac{\lambda}{\sigma}}
\left(\lambda-\kappa
\sigma+\left(\frac{k^{*}}{x}\right)^{\frac{2\lambda}{\sigma}}\left(\lambda+\kappa
\sigma \right)\right)\\
& \quad +\left(\frac{k^{*}}{x}\right)^{\kappa +
\frac{\lambda}{\sigma}}\left(\lambda-\kappa
\sigma+\left(\frac{K}{x}\right)^{\frac{2\lambda}{\sigma}}(\lambda+\kappa
\sigma)\right)\Bigg)
\end{split}
\end{eqnarray}
Since $k^{*}\geq K$, consider only the second factor in
the above representation.  Now taking a derivative yields
\begin{equation}
\begin{split}
\frac{2k^{*}\left(\frac{K}{k^{*}}\right)^{\frac{2\lambda}{\sigma}}
\left(\frac{k^{*}}{x}\right)^{-1+\frac{2\lambda}{\sigma}}\lambda(\lambda+\kappa\sigma)}{x^{2}\sigma}
-\frac{2K\left(\frac{K}{x}\right)^{-1+\frac{2\lambda}{\sigma}}\lambda(\lambda+\kappa\sigma)}{x^{2}\sigma}&=0
\end{split}
\end{equation}
Thus, the second factor is a constant function of $x$.  Substituting
the value at $x=k^{*}$ into the second factor produces

\begin{eqnarray}
\begin{split}
-\left(-1+\left(\frac{K}{k^{*}}\right)^{\frac{2\lambda}{\sigma}}(\lambda-\kappa
\sigma)\right)\geq 0
\end{split}
\end{eqnarray}
We conclude that $v(x)$ is increasing on $(K,k^{*})$.  Additionally, using the fact that $v(x)$ is $r$-harmonic on $(K,k^{*})$, $r\leq d$, $v_{x}(x)\geq 0$, and $v(x)>0$, we have
\begin{eqnarray}
\begin{split}
v_{xx}=\frac{2}{\sigma^{2}x^{2}}\left[(d-r)xv_{x}(x)+r v(x)\right]>0
\end{split}
\end{eqnarray}
Thus, $v(x)$ is convex on $(K,k^{*})$.  At this point, we conclude $v(x)$ is a convex function on $(0,\infty)$.  Summing up, $v(x)\in
\mathcal{C}^{2}(0,K)\cup\mathcal{C}^{1}(K,\infty)\cup
\mathcal{C}^{2}[(K,\infty)\setminus \{k^{*}\}]$, $v(x)$ is $r$-harmonic on $(0,K)\cup(K,k^{*})$ and $v(x)$ is $r$-superharmonic on $(k^{*},\infty)$.  Using these results, the following argument by Kyprianou (2004) proves that the solution to the boundary value problem is, indeed, the value function.  Let $\sigma_{k^{*}}:=\inf{\{t\geq 0: X_{t}\geq k^{*} \}}$ and $\tau_{K}:=\inf{\{t\geq 0: X_{t}=K\}}$. 
\begin{equation}
\begin{split}
v(x) &\leq \inf_{\tau \in
\mathcal{S}_{0,\infty}}\mathbb{E}_{x}[e^{-r(\tau \wedge
\sigma_{k^{*}})}v(X_{\tau \wedge \sigma_{k^{*}}})]\\
&\leq \inf_{\tau \in \mathcal{S}_{0,\infty}}
\mathbb{E}_{x}[e^{-r(\tau \wedge \sigma_{k^{*}})}
\left((X_{\sigma_{{k^{*}}}}-K)^{+}1_{\{\sigma_{k^{*}}\leq
\tau\}}+((X_{\tau}-K)^{+}+\delta)1_{\{\tau<\sigma_{k^{*}}\}}\right)]\\
&\leq \sup_{\sigma \in \mathcal{S}_{0,\infty}}\inf_{\tau \in
\mathcal{S}_{0,\infty}} \mathbb{E}_{x}[e^{-r(\tau \wedge \sigma)}
\left((X_{\sigma}-K)^{+}1_{\{\sigma \leq
\tau\}}+((X_{\tau}-K)^{+}+\delta)1_{\{\tau<\sigma\}}\right)]\\
&\leq \sup_{\sigma \in
\mathcal{S}_{0,\infty}}\mathbb{E}_{x}[e^{-r(\tau_{K} \wedge \sigma)}
\left((X_{\sigma}-K)^{+}1_{\{\sigma \leq
\tau_{K}\}}+((X_{\tau_{K}}-K)^{+}+\delta)1_{\{\tau_{K}<\sigma\}}\right)]\\
&\leq \sup_{\sigma \in \mathcal{S}_{0,\infty}}\mathbb{E}_{x}[e^{-r
(\tau_{K}\wedge \sigma)}v(X_{\tau_{K}\wedge \sigma})]\\
&\leq v(x)
\end{split}
\end{equation}
The first inequality follows since $v(x)$ is $r$-harmonic on $(0,K)\cup (K,k^{*})$. The
second inequality follows since $v(x)$ satisfies
$(x-K)^{+}\leq v(x) \leq (x-K)^{+}+\delta$.  The third and fourth
inequalities follow using the definition of the supremum and infimum
respectively. The fifth inequality holds using the same bound as in
the second inequality. The final inequality follows since $v(x)$ is $r$-superharmonic on $(k^{*},\infty)$.  Note, the order of the supremum and
infimum can be switched by establishing the above inequalities in
reverse.  This completes the proof. 
\end{proof}

\section{Valuation when $r>d$}
\label{val-rd}
Here we assume the interest rate $r$ is strictly larger than the
constant dividend yield $d$ of the underlying asset.  It seems reasonable to conjecture that the value function is identical to the solution found in the prior parameter case.  However, Figure \ref{perCallDelt10DR01} disproves this hypothesis since the proposed value function defined in Proposition \ref{prop1} does not
satisfy the basic inequality, 
\begin{equation}
(x-K)^{+}\leq V(x) \leq (x-K)^{+}+\delta
\end{equation}

\begin{figure}[h]
\centering
\epsfig{file=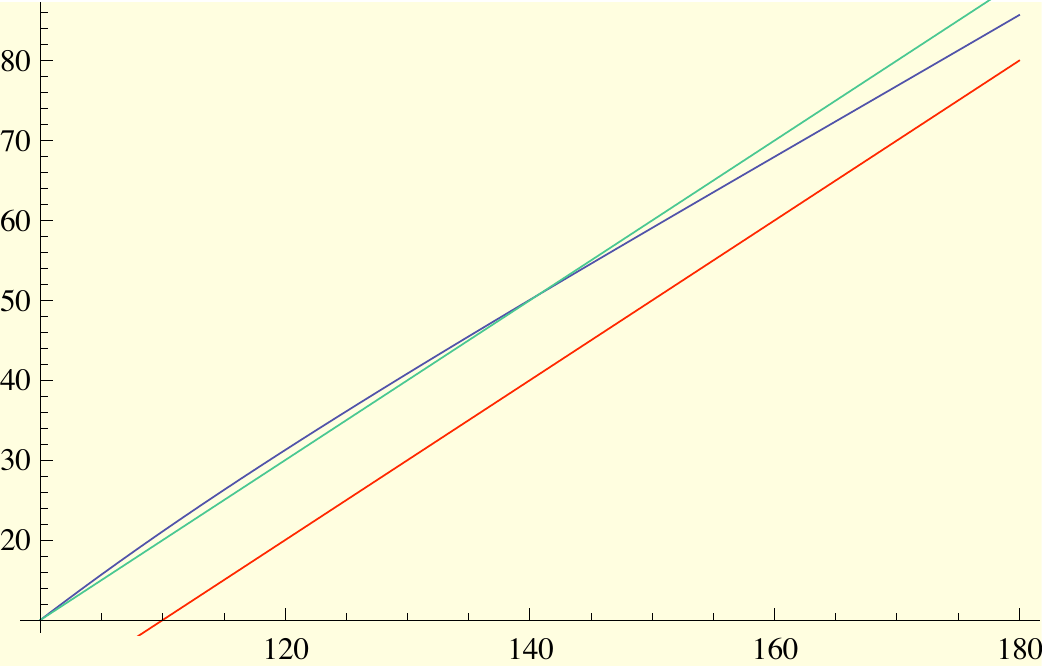,width=0.9\linewidth,clip=}
\caption[$\delta$-penalty call option:  This figure shows that the proposed 
value function (blue line) violates the upper bound on the value (green line) on the interval $(K,k^{*})$.]
{$\delta$-penalty call option:  This figure shows that the proposed 
value function (blue line) violates the upper bound on the value (green line) on the interval $(K,k^{*})$.  
Parameter values are $r=0.06, d =0.03$, $\sigma =0.20$,
$K=100$, $k^{*}=272.4404$, and
$\delta=10$.}\label{perCallDelt10DR01}
\end{figure}

With this information, it seems likely that the cancellation region is of the form $[K,h], \ h \neq K$.  The following argument suggests why the closed region
should be connected.  Suppose that for $x<y$,
$V(x)=(x-K)^{+}+\delta$ and $V(y)=(y-K)^{+}+\delta$ and that for
some $z$ where $x<z<y$, $V(z)<(z-K)^{+}+\delta$.  Since $V$
is a continuous function with derivative satisfying $0\leq
V_{x}\leq 1$ (see Proposition \ref{perpLipschitz}), we have an immediate contradiction. \\
\indent It is well-known that the fundamental solutions of the ordinary second order differential equation $\mathcal{L}v-rv=0$ are $\psi(x)=x^{\eta}$ and $\varphi(x)=x^{\nu}$, where

\begin{eqnarray}
\begin{split}
\eta&=\frac{1}{2}-\frac{r-d}{\sigma^{2}}+\sqrt{\left(\frac{1}{2}-\frac{r-d}{\sigma^{2}}\right)^{2}+\frac{2r}{\sigma^{2}}}>0\\
\nu&= \frac{1}{2}-\frac{r-d}{\sigma^{2}}-\sqrt{\left(\frac{1}{2}-\frac{r-d}{\sigma^{2}}\right)^{2}+\frac{2r}{\sigma^{2}}}<0
\end{split}
\end{eqnarray}

are the roots of the equation $\sigma^{2}(y-1)y+2(r-d)=2r$.  In addition, $\psi'(x)-\varphi(x)-\varphi'(x)\psi(x)=B S'(x)$.  Here, $B>0$ denotes the Wronskian of the fundamental solutions $(\psi(x)$, $\varphi(x))$, and $S'(x)$ is the density of the scale function $S$, where 
\begin{eqnarray}
\begin{split}
S(x):=\int_{c}^{x}\exp\left(-2\int_{c}^{y}\frac{(r-d)z}{\sigma^{2}z^{2}}\mathrm{d}z\right)\mathrm{d}y, \ \text{for} \ x \in (0,\infty)
\end{split}
\end{eqnarray}
where $c$ is an arbitrary fixed element of $(0,\infty)$.  The functions 
\begin{eqnarray}
\begin{split}
\hat{\psi}_{h}(x):=\psi(x)-\frac{\psi(h)}{\varphi(h)}\varphi(x)\\
\hat{\varphi}_{k}(x):=\varphi(x)-\frac{\varphi(k)}{\psi(k)}\psi(x)
\end{split}
\end{eqnarray}
are the fundamental solutions of $\mathcal{L}v-rv=0$ defined on the domain of the differential operator of the killed diffusion $\{X_{t}:t \in [0, \lambda_{h}\wedge \lambda_{k})\}$; $\lambda_{a}:=\inf\{t \geq 0: X_{t}=a\}$.  Finally, the density of the speed measure of $X_{t}$ is  $m'(y)=\frac{2}{\sigma^{2}y^{2}S'(y)}$ (see Borodin, Salminen (1996) Chapter 2 for details).  \\
\indent Using the above information, we now present the main result of this section.  Let $\sigma_{k^{*}}:=\inf{\{t\geq 0: X_{t}\geq k^{*}\}}$, and $\tau_{[K,h^{*}]}:=\inf{\{t\geq 0: K \leq X_{t}\leq h^{*}\}}$ where $k^{*}$ and $h^{*}$ are defined below.

\begin{thm}
\label{thm-rd}  Suppose $r\geq d$.  For $0<\delta \leq \delta^{*}$, the perpetual $\delta$-penalty call option has value process $V(X_{t})$ with
\begin{equation}
\label{valuefunction}
\begin{split}
V(x)=
\begin{cases}
x-K & \textrm{if} \ x \in [k^{*},\infty)\\
(k^{*}-K)^{+}\mathbb{E}_{x}[e^{-r\sigma_{k^{*}}}1_{\{\sigma_{k^{*}}\leq
\tau_{[K,h^{*}]}\}}] \\
\quad + ((h^{*}-K)^{+}+\delta)
\mathbb{E}_{x}[e^{-r\tau_{[K,h^{*}]}}1_{\{\tau_{[K,h^{*}]}<\sigma_{k^{*}}\}}]
& \textrm{if} \ x \in (h^{*},k^{*})\\
(x-K)+\delta & \textrm{if} \ x \in [K,h^{*}]\\
\delta \ \mathbb{E}_{x}[e^{-r\tau_{[K,h^{*}]}}] & \textrm{if} \ x \in (0,K)
\end{cases}
\end{split}
\end{equation}
where the pair $(h^{*}, k^{*})$ both satisfies the equations

\begin{eqnarray}
\label{opt-hk}
\begin{split}
\frac{1}{S'(h^{*})}\hat{\varphi}_{k^{*}}(h^{*})-\frac{\hat{\varphi}'_{k^{*}}(h^{*})}{S'(h^{*})}((h^{*}-K)^{+}+\delta)&=B\frac{(k^{*}-K)^{+}}{\psi(k^{*})}\\
\frac{1}{S'(k^{*})}\hat{\psi}_{h^{*}}(k^{*})-\frac{\hat{\psi}'_{h^{*}}(k^{*})}{S'(k^{*})}(k^{*}-K)^{+}&=-B\frac{(h^{*}-K)^{+}+\delta}{\varphi(h^{*})}
\end{split}
\end{eqnarray}
and the inequalities $K<h^{*}<k^{*}$.  Thus, the value function $V$ is continuous for all $x>0$ and is differentiable at $h^{*}$ and $k^{*}$ (by (\ref{opt-hk})).    
\end{thm}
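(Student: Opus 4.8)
I would prove Theorem~\ref{thm-rd} by the verification scheme used for Theorem~\ref{prop1} (in the spirit of Kyprianou (2004)), the one structural novelty being that the cancellation region is now a nondegenerate interval $[K,h^{*}]$ rather than the single strike $\{K\}$. First I would build the candidate $v$ piecewise: on $[k^{*},\infty)$ set $v(x)=x-K$; on $[K,h^{*}]$ set $v(x)=(x-K)+\delta$; and on the conjectured continuation set $(0,K)\cup(h^{*},k^{*})$ let $v$ solve $\mathcal{L}v=rv$ under the value-matching conditions $\lim_{x\downarrow0}v(x)=0$, $v(K)=\delta$, $v(h^{*})=(h^{*}-K)+\delta$, $v(k^{*})=k^{*}-K$ and the smooth-fit conditions $v_{x}(h^{*})=v_{x}(k^{*})=1$. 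On $(0,K)$ the solution bounded at the origin is $\delta(x/K)^{\eta}$, which equals $\delta\,\mathbb{E}_{x}[e^{-r\tau_{[K,h^{*}]}}]$ since from $x<K$ one has $\tau_{[K,h^{*}]}=\lambda_{K}$; on $(h^{*},k^{*})$ the $r$-harmonic function with the prescribed endpoint values is exactly the probabilistic expression in (\ref{valuefunction}), because the diffusion leaves $(h^{*},k^{*})$ either through $k^{*}$ (the event $\{\sigma_{k^{*}}\le\tau_{[K,h^{*}]}\}$) or through $h^{*}$ (the event $\{\tau_{[K,h^{*}]}<\sigma_{k^{*}}\}$). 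Writing that piece via $\hat{\psi}_{h^{*}},\hat{\varphi}_{k^{*}}$ and imposing $v_{x}(h^{*})=v_{x}(k^{*})=1$, then using $\hat{\psi}_{h^{*}}(h^{*})=\hat{\varphi}_{k^{*}}(k^{*})=0$, the identity $\hat{\psi}_{h^{*}}(k^{*})/\psi(k^{*})=\hat{\varphi}_{k^{*}}(h^{*})/\varphi(h^{*})$, and the Wronskian relation for $(\psi,\varphi)$, reduces the two smooth-fit conditions to precisely the system (\ref{opt-hk}). Granting a solution $(h^{*},k^{*})$ with $K<h^{*}<k^{*}$, continuity of $V$ is immediate from value-matching and differentiability at $h^{*}$ and $k^{*}$ is exactly smooth fit, which is the last assertion of the theorem (note that $v$ need not be differentiable at $K$: there it has an upward kink, just as in Theorem~\ref{prop1}). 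Existence of such a pair, together with the ordering $K<h^{*}<k^{*}$ and the auxiliary location facts used below, I would obtain by a continuity/intermediate-value argument paralleling the analysis of $\delta$ as a function of $k^{*}$ in Section~\ref{val-dr}, with the bookkeeping relegated to Section~\ref{Appendix}.

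\textbf{The bounds.} Next I would verify $(x-K)^{+}\le v(x)\le (x-K)^{+}+\delta$. This is automatic on $(0,K)$ (there $0\le\delta(x/K)^{\eta}\le\delta$), on $[K,h^{*}]$ (equality with the upper bound) and on $[k^{*},\infty)$ (equality with the lower bound), so the content is on $(h^{*},k^{*})$. Setting $g_{-}(x):=v(x)-(x-K)$ and $g_{+}(x):=(x-K)+\delta-v(x)$, one computes $\mathcal{L}g_{-}-rg_{-}=dx-rK$ and $\mathcal{L}g_{+}-rg_{+}=-(dx-r(K-\delta))$, so the signs of $dx-rK$ (changing at $rK/d$) and of $dx-r(K-\delta)$ (changing at $r(K-\delta)/d$) control whether $g_{\pm}$ is $r$-sub- or $r$-superharmonic. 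Splitting $(h^{*},k^{*})$ at these thresholds, invoking the maximum/minimum principle on each subinterval, and feeding in the boundary data $g_{-}(h^{*})=g_{+}(k^{*})=\delta>0$, $g_{-}(k^{*})=g_{+}(h^{*})=0$ and the smooth-fit slopes $g_{-}'(k^{*})=g_{+}'(h^{*})=0$, yields $g_{\pm}\ge0$: for example, if $g_{-}$ vanished somewhere before $h^{*}$ it would be positive and $r$-subharmonic on a subinterval with zero values at both ends, hence would attain an interior maximum at which $\mathcal{L}g_{-}-rg_{-}\le0$ forces $g_{-}\le0$, a contradiction. The same analysis simultaneously fixes the position of $k^{*}$ and $h^{*}$ relative to $rK/d$ and $r(K-\delta)/d$, which is used in the next step.

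\textbf{Superharmonicity and the sandwich.} By construction $v$ is $r$-harmonic on $(0,K)\cup(h^{*},k^{*})$; at $K$ the one-sided derivatives are $v_{x}(K-)=\delta\eta/K$ and $v_{x}(K+)=1$, and an elementary estimate shows $0<\delta\le\delta^{*}$ forces $\delta\eta/K\le1$, so the kink at $K$ points upward; on $[K,h^{*}]$ one has $\mathcal{L}v-rv=rK-dx-r\delta$ and on $(k^{*},\infty)$ one has $\mathcal{L}v-rv=rK-dx$, both $\le0$ by the location facts from the previous step, and no singular part is created at the $C^{1}$ point $k^{*}$, so $v$ is $r$-subharmonic on $(0,k^{*})$ and $r$-superharmonic on $(k^{*},\infty)$. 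With these ingredients the six-term chain of Theorem~\ref{prop1} goes through with $\sigma^{*}=\sigma_{k^{*}}$ and $\tau^{*}=\tau_{[K,h^{*}]}$:
\begin{equation*}
v(x)\le\inf_{\tau}\mathbb{E}_{x}\!\left[e^{-r(\tau\wedge\sigma_{k^{*}})}v(X_{\tau\wedge\sigma_{k^{*}}})\right]\le\cdots\le\sup_{\sigma}\mathbb{E}_{x}\!\left[e^{-r(\tau_{[K,h^{*}]}\wedge\sigma)}v(X_{\tau_{[K,h^{*}]}\wedge\sigma})\right]\le v(x),
\end{equation*}
where the first inequality uses $r$-subharmonicity on $(0,k^{*})$ via the It\^o--Tanaka formula, the last uses $r$-harmonicity on the continuation set together with $r$-superharmonicity on $(k^{*},\infty)$, and the middle steps use the bounds above and the definitions of $\sup$ and $\inf$; the optional-stopping passages are justified as in Kyprianou (2004) using that $v$ is bounded on $(0,k^{*}]$ and of linear growth beyond, that the discounted asset is a $\mathbb{P}$-supermartingale, and Fatou's lemma. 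Running the same chain in reverse shows the $\inf\sup$ and $\sup\inf$ in (\ref{optstopping}) coincide and equal $v$, so $V=v$ and the theorem — including the final continuity and differentiability statement — follows.

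\textbf{Main obstacle.} The crux is the loss of convexity when $r>d$ (the phenomenon behind Figure~\ref{perCallDelt10DR01}): one cannot read off the required super-/sub-harmonicity of $v$ and of $v-(x-K)$ from a single global convexity statement as in Theorem~\ref{prop1}, and must instead track the sign changes of $\mathcal{L}v-rv$ across $x=rK/d$ and $x=r(K-\delta)/d$ and argue by the maximum principle on each subinterval; intertwined with this is the delicate determination of the relative position of $h^{*}$ and $k^{*}$ (the ordering $K<h^{*}<k^{*}$ and the placement of $h^{*},k^{*}$ on the correct side of those thresholds), which is exactly what must be verified to make the system (\ref{opt-hk}) admit a solution with the properties the verification needs. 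The integrability bookkeeping in the sandwich and the existence argument for $(h^{*},k^{*})$ are the remaining, more routine, points.
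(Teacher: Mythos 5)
Your proposal is correct in its overall architecture and follows the same verification route as the paper: build the candidate from $\mathcal{L}v=rv$ on $(0,K)\cup(h^{*},k^{*})$ with value matching and smooth fit (which is exactly how (\ref{opt-hk}) arises, via the Wronskian), check the sandwich $(x-K)^{+}\le v\le (x-K)^{+}+\delta$, check the sub/superharmonicity on either side of $k^{*}$, and run the two-sided chain of optional-sampling inequalities with $\sigma_{k^{*}}$ and $\tau_{[K,h^{*}]}$, using Fatou and bounded convergence exactly as you indicate. Where you genuinely diverge is in the two supporting lemmas. For the sandwich, the paper (Lemma \ref{lem-ineq}, following Alvarez (2008)) shows that the ratios $\bigl(V-g_{1}\bigr)/\hat{\varphi}_{k^{*}}$ and $\bigl(V-g_{2}\bigr)/\hat{\psi}_{h^{*}}$ are monotone, with the sign of their derivatives controlled by partial integrals of $(\mathcal{L}g_{i}-rg_{i})$ against the speed measure, and these partial integrals are in turn controlled by the integral form (\ref{opt-Alv}) of the first-order conditions; you instead propose a direct maximum/minimum principle on $g_{\pm}$ after splitting $(h^{*},k^{*})$ at $rK/d$ and $r(K-\delta)/d$. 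That can be made to work, but note that on the subinterval adjacent to the smooth-fit point the bare maximum principle is not enough: you must first use $g_{-}(k^{*})=g_{-}'(k^{*})=0$ together with $g_{-}''(k^{*}-)=\tfrac{2}{\sigma^{2}k^{*2}}(dk^{*}-rK)>0$ to see $g_{-}>0$ immediately to the left of $k^{*}$ before the "zero boundary values on a subharmonic subinterval" contradiction applies (and symmetrically for $g_{+}$ at $h^{*}$); your illustrative sentence ("if $g_{-}$ vanished somewhere before $h^{*}$\dots") garbles this. Similarly, the paper locates $h^{*}<r(K-\delta)/d$ and $k^{*}>rK/d$ directly from the integral form of (\ref{opt-hk}) via the monotonicity of $L_{1},L_{2}$ (Lemma \ref{lemma-hk}), whereas you extract these as necessary conditions for the candidate to satisfy the bounds; both are legitimate, but the paper's version applies to any solution of the system, which is what the verification actually consumes. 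Finally, one sign slip: on $[K,h^{*}]$ you have $\mathcal{L}v-rv=rK-dx-r\delta>0$ (not $\le 0$) precisely because $h^{*}<r(K-\delta)/d$; your stated conclusion that $v$ is $r$-subharmonic on $(0,k^{*})$ is the right one, but it contradicts the "both $\le 0$" you wrote immediately before it. Your observation that $\delta\le\delta^{*}$ forces $v_{x}(K-)=\delta\eta/K\le 1$, so the kink at $K$ contributes a nonnegative local-time term, is a point the paper's proof uses implicitly but does not spell out.
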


The distinctive feature of this valuation formula is that the writer's termination region is the interval
$[K,h^{*}]$ for $h^{*}>K$ rather than simply the singleton $\{K\}$.
 Intuition for this result arises by examining the instantaneous gain
 to the writer for terminating the contract at time $t$.  A positive value for $rK-dX_{t}-\delta$
 provides an incentive for the writer to terminate the call option.  This may occur when
 the interest rate $r$ is larger than the dividend rate $d$.  If
 such a situation develops, then immediate termination by the writer might be
 preferable for some asset values strictly greater than the strike price
 (e.g. see Figure \ref{PerCallRD01}).  Before proving Theorem \ref{thm-rd}, we state a useful lemma concerning the pair $(h^{*},k^{*})$ whose proof appears in Section \ref{Appendix}.
 
 \begin{lem}
 \label{lemma-hk}
 A pair $(h^{*},k^{*})$ solving the equations (\ref{opt-hk}) with $K<h^{*}<k^{*}$ satisfies $h^{*}<\frac{r(K-\delta)}{d}$ and $k^{*}>\frac{r}{d}K$.
 \end{lem}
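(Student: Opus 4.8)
The plan is to convert the two equations (\ref{opt-hk}) into sign conditions on the one–sided second derivatives of $V$ at $h^{*}$ and $k^{*}$, and then to read those signs off the global shape of the middle branch of $V$. On $(h^{*},k^{*})$ this branch is $r$–harmonic, so I would write it as a combination of the fundamental solutions, $V(x)=A\psi(x)+C\varphi(x)=Ax^{\eta}+Cx^{\nu}$, and note that value–matching $V(h^{*})=h^{*}-K+\delta$, $V(k^{*})=k^{*}-K$ is already built into (\ref{valuefunction}), while — as recorded in the statement of Theorem \ref{thm-rd}, or as one sees directly on eliminating $A,C$ using $\psi'\varphi-\varphi'\psi=BS'$ — the system (\ref{opt-hk}) is exactly the requirement that this branch also be differentiable at $h^{*}$ and $k^{*}$, i.e. $V'(h^{*})=V'(k^{*})=1$. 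Substituting these four boundary values into $\tfrac12\sigma^{2}x^{2}V''=rV-(r-d)xV'$ at the endpoints gives
\begin{align*}
\tfrac{1}{2}\sigma^{2}(h^{*})^{2}V''(h^{*}+)&=r(h^{*}-K+\delta)-(r-d)h^{*}=dh^{*}-r(K-\delta),\\
\tfrac{1}{2}\sigma^{2}(k^{*})^{2}V''(k^{*}-)&=r(k^{*}-K)-(r-d)k^{*}=dk^{*}-rK.
\end{align*}
Since $d>0$ (otherwise the bound on $k^{*}$ is vacuous), $h^{*}<r(K-\delta)/d$ is equivalent to $V''(h^{*}+)<0$ and $k^{*}>rK/d$ is equivalent to $V''(k^{*}-)>0$, so it suffices to prove these two sign statements.

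First I would pin down the admissible signs of $A$ and $C$. From the explicit formulas for $\eta,\nu$ one checks $\eta>1>0>\nu$ precisely because $d>0$, so $\eta(\eta-1)>0$ and $\nu(\nu-1)>0$. If $A$ or $C$ vanished then, by $V'(h^{*})=V'(k^{*})=1$, both would vanish and $V\equiv0$, contradicting $V(h^{*})>0$; and writing $V'(h^{*})=V'(k^{*})$ as
\begin{equation*}
A\eta\bigl((h^{*})^{\eta-1}-(k^{*})^{\eta-1}\bigr)=-C\nu\bigl((h^{*})^{\nu-1}-(k^{*})^{\nu-1}\bigr),
\end{equation*}
in which the first parenthesis is negative and the second positive because $h^{*}<k^{*}$, forces $\operatorname{sign}(A)=-\operatorname{sign}(C)$. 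Hence $A,C$ are nonzero of opposite sign, so $V''(x)=A\eta(\eta-1)x^{\eta-2}+C\nu(\nu-1)x^{\nu-2}$ changes sign exactly once, at the unique $x_{0}>0$ with $x_{0}^{\eta-\nu}=-C\nu(\nu-1)/\bigl(A\eta(\eta-1)\bigr)$. If $x_{0}\notin(h^{*},k^{*})$ then $V''$ would have constant sign on $(h^{*},k^{*})$ and $V'$ would be strictly monotone there, contradicting $V'(h^{*})=V'(k^{*})=1$; hence $h^{*}<x_{0}<k^{*}$.

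The remaining step — deciding which of $A,C$ is positive — is where the hypothesis $\delta>0$ is used, and is the crux of the argument. If $A<0<C$ then $V''>0$ on $(h^{*},x_{0})$ and $V''<0$ on $(x_{0},k^{*})$, so $V'$ strictly increases then strictly decreases on $[h^{*},k^{*}]$ and therefore $V'(x)>1$ throughout $(h^{*},k^{*})$; but then
\begin{equation*}
(k^{*}-h^{*})-\delta=(k^{*}-K)-(h^{*}-K+\delta)=V(k^{*})-V(h^{*})=\int_{h^{*}}^{k^{*}}V'(x)\,\mathrm{d}x>k^{*}-h^{*},
\end{equation*}
i.e. $\delta<0$, a contradiction. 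Hence $A>0>C$, so $V''<0$ on $(h^{*},x_{0})$ and $V''>0$ on $(x_{0},k^{*})$, giving $V''(h^{*}+)<0$ and $V''(k^{*}-)>0$; by the first paragraph this is exactly $h^{*}<r(K-\delta)/d$ and $k^{*}>rK/d$. I expect the main obstacle to be this last sign determination: everything before it is bookkeeping, but showing that $V$ is concave near $h^{*}$ and convex near $k^{*}$ (rather than the reverse) genuinely requires combining the smooth–fit slopes $V'(h^{*})=V'(k^{*})=1$ with the strict positivity of the penalty, while the convexity input $\eta>1$ is where the dividend assumption $d>0$ enters.
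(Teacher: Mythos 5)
Your proof is correct, but it takes a genuinely different route from the one in the paper. The paper follows Alvarez (2008): it rewrites the system (\ref{opt-hk}) in integral form, $B^{-1}\int_{h^{*}}^{k^{*}}(\mathcal{L}g_{i}-rg_{i})(x)(\cdot)\,m'(x)\mathrm{d}x=\pm(g_{2}-g_{1})$, observes that $(\mathcal{L}g_{2}-rg_{2})(x)=rK-dx-r\delta$ and $(\mathcal{L}g_{1}-rg_{1})(x)=rK-dx$ each change sign exactly once (at $\tfrac{r(K-\delta)}{d}$ and $\tfrac{rK}{d}$ respectively), and then argues that the functions $L_{1}(h)$, $L_{2}(k)$ whose zeros define $h^{*}$, $k^{*}$ are monotone on either side of these thresholds with the wrong sign at one endpoint, so any root must lie on the claimed side. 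Your argument reaches the same sign conditions by a different path: your identities $\tfrac12\sigma^{2}(h^{*})^{2}V''(h^{*}+)=dh^{*}-r(K-\delta)$ and $\tfrac12\sigma^{2}(k^{*})^{2}V''(k^{*}-)=dk^{*}-rK$ are precisely $-(\mathcal{L}g_{2}-rg_{2})(h^{*})$ and $-(\mathcal{L}g_{1}-rg_{1})(k^{*})$, so the lemma becomes a curvature statement, which you then settle by classifying the coefficients $A,C$ of the explicit power solutions, localizing the unique inflection point $x_{0}$ in $(h^{*},k^{*})$, and ruling out the concave-then-convex-reversed configuration with the integral identity $\int_{h^{*}}^{k^{*}}V'=k^{*}-h^{*}-\delta$ and $\delta>0$. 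Your version is more elementary and self-contained for geometric Brownian motion, and it makes transparent exactly where $\delta>0$ enters; the paper's version is coordinate-free (it carries over verbatim to general linear diffusions, needing only the single-crossing property of $\mathcal{L}g_{i}-rg_{i}$) and yields slightly more, namely that for each \emph{fixed} $k$ any admissible root $h^{*}_{k}$ of the first equation alone already lies in $(K,\tfrac{r(K-\delta)}{d})$, which is useful for the existence analysis, whereas your argument only applies to a pair solving both equations simultaneously. One small point of reliance: you take from the statement of Theorem \ref{thm-rd} that (\ref{opt-hk}) is equivalent to smooth fit $V'(h^{*})=V'(k^{*})=1$; that is indeed how the paper uses (\ref{opt-hk}), and your parenthetical derivation via the Wronskian identity is the correct way to check it, so this is not a gap.
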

We now prove the main result.
\begin{proof}
\textit{(of Theorem \ref{thm-rd}}) 
Recall, $V^{*}(x)$ denotes the value function from (\ref{optstopping}).
Here, we intend to show $V^{*}(x)=V(x)$ for $x>0$ by establishing the following sequence of relations
\begin{equation}
\label{pfstrat}
\begin{split}
V(x)&\geq\sup_{\sigma\in
\mathcal{S}_{0,\infty}}\mathbb{E}_{x}[e^{-r(\sigma \wedge
\tau_{[K,h^{*}]})}\{((X_{\tau_{[K,h^{*}]}}-K)^{+}+\delta)1_{\{\tau_{[K,h^{*}]}<\sigma\}}
+(X_{\sigma}-K)^{+}1_{\{\sigma\leq
\tau_{[K,h^{*}]}\}}\}]\\
&\geq \inf_{\tau\in \mathcal{S}_{0,\infty}}\sup_{\sigma\in
\mathcal{S}_{0,\infty}}\mathbb{E}_{x}[e^{-r(\sigma \wedge
\tau)}\{((X_{\tau}-K)^{+}+\delta)1_{\{\tau<\sigma\}}
+(X_{\sigma}-K)^{+}1_{\{\sigma\leq
\tau\}}\}]\\
&\geq\sup_{\sigma\in \mathcal{S}_{0,\infty}}\inf_{\tau\in
\mathcal{S}_{0,\infty}}\mathbb{E}_{x}[e^{-r(\sigma \wedge
\tau)}\{((X_{\tau}-K)^{+}+\delta)1_{\{\tau<\sigma\}}
+(X_{\sigma}-K)^{+}1_{\{\sigma\leq
\tau\}}\}]\\
&\geq V(x)
\end{split}
\end{equation}

Notice that justification of the the first and last relations will complete the proof.  We begin by establishing the first inequality.  By (\ref{valuefunction}) and (\ref{opt-hk}), $V$ is continuously differentiable everywhere except at $K$, and twice continuously differentiable everywhere except at $K$, $h^{*}$, and $k^{*}$.  Using the change-of-variable formula with local time on curves (Peskir (2005) Remark 2.3) applied to $e^{-rt}V(X_{t})$, we obtain

\begin{eqnarray}
\label{peskir}
\begin{split}
e^{-rt}V(X_{t})&=V(x)+\int_{0}^{t}(\mathcal{L}V-rV)(s,X_{s})1_{\{X_{s}\neq
k^{*}\}\cap \{X_{s}\neq
h^{*}\}\cap \{X_{s}\neq K\}}\mathrm{d}s\\
& \quad \quad +\int_{0}^{t}e^{-rs}\sigma
X_{s}V_{x}(X_{s})1_{\{X_{s}\neq k^{*}\}\cap \{X_{s}\neq h^{*}\}\cap
\{X_{s}\neq K\}}\mathrm{d}W_{s}\\
& \quad \quad \quad
+\frac{1}{2}\int_{0}^{t}e^{-rs}(V_{x}(X_{s}+)-V_{x}(X_{s}-))1_{\{X_{s}=K\}}\mathrm{d}\ell^{K}_{s}(X)
\end{split}
\end{eqnarray}
where $\ell^{c}_{s}(X)$ is the local time of $X$ at the curve $c$
given by
\begin{eqnarray}
\begin{split}
\ell^{c}_{s}(X)=\lim_{\epsilon\downarrow 0}
\frac{1}{2\epsilon}\int^{s}_{0}1_{\{c(v)-\epsilon<X_{v}<c(v)+\epsilon\}}\mathrm{d}[X]_{v}
\end{split}
\end{eqnarray}

In the following, let $(\tau_{n})_{n=1}^{\infty}$ be a localizing sequence for the continuous local martingale, 
\begin{eqnarray}
\int_{0}^{t}e^{-rs}\sigma
X_{s}V_{x}(X_{s})1_{\{X_{s}\neq k^{*}\}\cap \{X_{s}\neq h^{*}\}\cap
\{X_{s}\neq K\}}\mathrm{d}W_{s}
\end{eqnarray}

\indent Let $x\in(h^{*},k^{*})$. Using the fact that $\mathcal{L}V=rV$ in
$(h^{*},k^{*})$ and the optional sampling theorem, we know for each
$n\geq 1$,
\begin{eqnarray}
\begin{split}
\mathbb{E}_{x}[e^{-r(\tau_{[K,h^{*}]}\wedge
\sigma_{k^{*}}\wedge \tau_{n})}V(X_{\tau_{[K,h^{*}]}\wedge
\sigma_{k^{*}}\wedge \tau_{n}})]=V(x)
\end{split}
\end{eqnarray}
Letting $n\rightarrow \infty$, we have by the bounded convergence theorem and the continuity of $V$,
\begin{eqnarray}
\label{mart-01}
\begin{split}
\mathbb{E}_{x}[e^{-r(\tau_{[K,h^{*}]}\wedge
\sigma_{k^{*}})}V(X_{\tau_{[K,h^{*}]}\wedge
\sigma_{k^{*}}})]=V(x)
\end{split}
\end{eqnarray}
This same argument also shows that (\ref{mart-01}) holds for $x\in(0,K)$ since $\mathcal{L}V=rV$ there.  Since (\ref{mart-01}) clearly holds when $x\in[K,h^{*}]$ and when $x\in[k^{*},\infty)$, we conclude (\ref{mart-01}) holds for all $x>0$.  Using Lemma \ref{lemma-hk}, we know for any $x \in [k^{*},\infty)$, 
\begin{equation}
(\mathcal{L}g_{1}-r g_{1})(x)=(r-d)x-r(x-K)=rK-dx<0
\end{equation}
where $g_{1}(x):=(x-K)^{+}$.  Therefore, for $x\in (h^{*},k^{*})$ and any $n\geq 1$,
\begin{eqnarray}
\begin{split}
\mathbb{E}_{x}[e^{-r(\sigma\wedge\tau_{[K,h^{*}]}\wedge
\tau_{n})}V(X_{\sigma\wedge\tau_{[K,h^{*}]}\wedge \tau_{n}})]\leq V(x)
\end{split}
\end{eqnarray}
Thus, by Fatou's lemma
\begin{eqnarray}
\begin{split}
\mathbb{E}_{x}[e^{-r(\sigma\wedge\tau_{[K,h^{*}]})}V(X_{\sigma\wedge\tau_{[K,h^{*}]}})]\leq
V(x)
\end{split}
\end{eqnarray}
Using Lemma \ref{lem-ineq}, we find
\begin{equation}
\begin{split}
&\mathbb{E}_{x}[e^{-r(\sigma \wedge
\tau_{[K,h^{*}]})}\{((X_{\tau_{[K,h^{*}]}}-K)^{+}+\delta)1_{\{\tau_{[K,h^{*}]}<\sigma\}}
+(X_{\sigma}-K)^{+}1_{\{\sigma\leq
\tau_{[K,h^{*}]}\}}\}]\\
&\leq \mathbb{E}_{x}[e^{-r(\sigma\wedge\tau_{[K,h^{*}]})}V(X_{\sigma\wedge\tau_{[K,h^{*}]}})]\\
&\leq V(x)
\end{split}
\end{equation}
Taking the supremum over all stopping times $\sigma$ yields,
\begin{equation}
\begin{split}
&\sup_{\sigma\in \mathcal{S}_{0,\infty}}\mathbb{E}_{x}[e^{-r(\sigma
\wedge
\tau_{[K,h^{*}]})}\{((X_{\tau_{[K,h^{*}]}}-K)^{+}+\delta)1_{\{\tau_{[K,h^{*}]}<\sigma\}}
+(X_{\sigma}-K)^{+}1_{\{\sigma\leq
\tau_{[K,h^{*}]}\}}\}]\\
& \quad \leq V(x)
\end{split}
\end{equation}
Thus, the first inequality of (\ref{pfstrat}) holds when
$x\in(h^{*},k^{*})$.  Continuing when $x\in(h^{*},k^{*})$, recall 
\begin{equation}
\begin{split}
&\mathbb{E}_{x}[e^{-r(\tau_{[K,h^{*}]}\wedge
\sigma_{k^{*}})}V(X_{\tau_{[K,h^{*}]}\wedge
\sigma_{k^{*}}})]\\
& \quad =\mathbb{E}_{x}[e^{-r(\sigma_{k^{*}} \wedge
\tau_{[K,h^{*}]})}\{((X_{\tau_{[K,h^{*}]}}-K)^{+}+\delta)1_{\{\tau_{[K,h^{*}]}<\sigma_{k^{*}}\}}
+(X_{\sigma_{k^{*}}}-K)^{+}1_{\{\sigma_{k^{*}}\leq
\tau_{[K,h^{*}]}\}}\}]\\
& \quad =V(x)
\end{split}
\end{equation}
Thus,
\begin{equation}
\begin{split}
V(x) &\geq
\inf_{\tau\in\mathcal{S}_{0,\infty}}\mathbb{E}_{x}[e^{-r(\sigma_{k^{*}}
\wedge \tau)}\{((X_{\tau}-K)^{+}+\delta)1_{\{\tau<\sigma_{k^{*}}\}}
+(X_{\sigma_{k^{*}}}-K)^{+}1_{\{\sigma_{k^{*}}\leq \tau\}}\}]
\end{split}
\end{equation}
We now establish the opposite inequality.  Using Lemma \ref{lemma-hk}, for any $x\in \left(K,\frac{r(K-\delta)}{d}\right)$, 
\begin{equation}
(\mathcal{L}g_{2}-r g_{2})(x)=(r-d)x-r((x-K)+\delta)=rK-dx-r \delta>0
\end{equation}
where $g_{2}(x):=(x-K)^{+}+\delta$.  Therefore, for $x\in (h^{*},k^{*})$ and $n\geq 1$ the optional
sampling theorem yields for any $\tau\in \mathcal{S}_{0,\infty} $,
\begin{eqnarray}
\begin{split}
\mathbb{E}_{x}[e^{-r(\tau\wedge\sigma_{k^{*}}\wedge
\tau_{n})}V(X_{\tau\wedge\sigma_{k^{*}}\wedge \tau_{n}})]\geq V(x)
\end{split}
\end{eqnarray}
By Lemma \ref{lem-ineq} we know,
\begin{equation}
\begin{split}
\mathbb{E}_{x}[e^{-r(\tau\wedge\sigma_{k^{*}}\wedge
\tau_{n})}V(X_{\tau\wedge\sigma_{k^{*}}\wedge \tau_{n}})]&\leq
\mathbb{E}_{x}[e^{-r(\tau\wedge\sigma_{k^{*}}\wedge
\tau_{n})}[((X_{\tau\wedge \tau_{n}}-K)^{+}+\delta)1_{\{\tau\wedge
\tau_{n}<\sigma_{k^{*}}\}}\\
& \quad \quad +(X_{\sigma_{k^{*}}}-K)^{+}1_{\{\sigma_{k^{*}}\leq
\tau\wedge \tau_{n}\}}]]
\end{split}
\end{equation}
Then, two applications of the bounded convergence theorem (while recalling the continuity of $V$) yields
\begin{equation}
\begin{split}
V(x)&\leq
\mathbb{E}_{x}[e^{-r(\tau\wedge\sigma_{k^{*}})}V(X_{\tau\wedge\sigma_{k^{*}}})]\\
&\leq
\mathbb{E}_{x}[e^{-r(\tau\wedge\sigma_{k^{*}})}[((X_{\tau}-K)^{+}+\delta)1_{\{\tau<\sigma_{k^{*}}\}}
+(X_{\sigma_{k^{*}}}-K)^{+}1_{\{\sigma_{k^{*}}\leq\tau\}}]]
\end{split}
\end{equation}
Hence,
\begin{equation}
\begin{split}
V(x)&\leq
\inf_{\tau\in\mathcal{S}_{0,\infty}}\mathbb{E}_{x}[e^{-r(\tau\wedge\sigma_{k^{*}})}[((X_{\tau}-K)^{+}+\delta)1_{\{\tau<\sigma_{k^{*}}\}}
+(X_{\sigma_{k^{*}}}-K)^{+}1_{\{\sigma_{k^{*}}\leq\tau\}}]]
\end{split}
\end{equation}
Thus, the opposite equality has been established and the following
relations hold.
\begin{equation}
\begin{split}
V(x)&= \inf_{\tau\in\mathcal{S}_{0,\infty}}\mathbb{E}_{x}
[e^{-r(\tau\wedge\sigma_{k^{*}})}[((X_{\tau}-K)^{+}+\delta)1_{\{\tau<\sigma_{k^{*}}\}}+(X_{\sigma_{k^{*}}}-K)^{+}1_{\{\sigma_{k^{*}}\leq
\tau\}}]]\\
&\leq
\sup_{\sigma\in\mathcal{S}_{0,\infty}}\inf_{\tau\in\mathcal{S}_{0,\infty}}
\mathbb{E}_{x}[e^{-r(\tau\wedge\sigma)}[((X_{\tau}-K)^{+}+\delta)1_{\{\tau<\sigma\}}+(X_{\sigma}-K)^{+}1_{\{\sigma\leq
\tau\}}]]\\
\end{split}
\end{equation}
This completes the justification and $V(x)=V^{*}(x)$ when
$x\in(h^{*},k^{*})$ as desired.\\
\indent Suppose $x\in(0,K)$.  Using the fact that $\mathcal{L}V-rV$ in $(0,K)$ and the optional sampling theorem, we know for each $n\geq 1$ and any $\sigma\in\mathcal{S}_{0,\infty}$,
\begin{eqnarray}
\begin{split}
\mathbb{E}_{x}[e^{-r(\tau_{[K,h^{*}]}\wedge
\sigma \wedge \tau_{n})}V(X_{\tau_{[K,h^{*}]}\wedge
\sigma \wedge \tau_{n}})]=V(x)
\end{split}
\end{eqnarray}
An application of the bounded convergence theorem (while recalling the continuity $V$) followed by Lemma \ref{lem-ineq} produces
\begin{equation}
\begin{split}
V(x)\geq \sup_{\sigma\in
\mathcal{S}_{0,\infty}}\mathbb{E}_{x}[e^{-r(\sigma \wedge
\tau_{[K,h^{*}]})}\{((X_{\tau_{[K,h^{*}]}}-K)^{+}+\delta)1_{\{\tau_{[K,h^{*}]}<\sigma\}}
+(X_{\sigma}-K)^{+}1_{\{\sigma\leq
\tau_{[K,h^{*}]}\}}\}]\
\end{split}
\end{equation}  
Thus, the first inequality in (\ref{pfstrat}) holds.  In addition, since $\tau_{[K,h^{*}]}<\sigma_{k^{*}}$, we have
\begin{equation}
\begin{split}
V(x)&=\mathbb{E}_{x}[e^{-r(\tau_{[K,h^{*}]}\wedge
\sigma_{k^{*}})}V(X_{\tau_{[K,h^{*}]}\wedge
\sigma_{k^{*}}})]\\
& \quad =\mathbb{E}_{x}[e^{-r(\sigma_{k^{*}} \wedge
\tau_{[K,h^{*}]})}\{((X_{\tau_{[K,h^{*}]}}-K)^{+}+\delta)1_{\{\tau_{[K,h^{*}]}<\sigma_{k^{*}}\}}
+(X_{\sigma_{k^{*}}}-K)^{+}1_{\{\sigma_{k^{*}}\leq
\tau_{[K,h^{*}]}\}}\}]
\end{split}
\end{equation}
which implies
\begin{equation}
\label{Vinf-02}
\begin{split}
V(x) &\geq
\inf_{\tau\in\mathcal{S}_{0,\infty}}\mathbb{E}_{x}[e^{-r(\sigma_{k^{*}}
\wedge \tau)}\{((X_{\tau}-K)^{+}+\delta)1_{\{\tau<\sigma_{k^{*}}\}}
+(X_{\sigma_{k^{*}}}-K)^{+}1_{\{\sigma_{k^{*}}\leq \tau\}}\}]
\end{split}
\end{equation}
The same argument used when $x\in(h^{*},k^{*})$ applies here to show that the opposite inequality in (\ref{Vinf-02}) holds.  Therefore, $V^{*}(x)=V(x)$ when $x\in(0,K)$. 

\indent Suppose $x\in[K,h^{*}]$.  Using (\ref{mart-01}) and the fact that $\tau_{[K,h^{*}]}=0$, for any stopping time $\sigma\in\mathcal{S}_{0,\infty}$,
\begin{equation}
\label{sup-01}
\begin{split}
V(x)&\geq \mathbb{E}_{x}[e^{-r(\sigma \wedge
\tau_{[K,h^{*}]})}\{((X_{\tau_{[K,h^{*}]}}-K)^{+}+\delta)1_{\{\tau_{[K,h^{*}]}<\sigma\}}
+(X_{\sigma}-K)^{+}1_{\{\sigma\leq
\tau_{[K,h^{*}]}\}}\}]
\end{split}
\end{equation}

Note that equality in (\ref{sup-01}) actually holds.  Now, taking the supremum over all stopping times in (\ref{sup-01}) yields the first inequality in (\ref{pfstrat}).
Again using (\ref{mart-01}) and $\tau_{[K,h^{*}]}=0$ yields,
\begin{equation}
\label{Vinf-03}
\begin{split}
V(x) &\geq
\inf_{\tau\in\mathcal{S}_{0,\infty}}\mathbb{E}_{x}[e^{-r(\sigma_{k^{*}}
\wedge \tau)}\{((X_{\tau}-K)^{+}+\delta)1_{\{\tau<\sigma_{k^{*}}\}}
+((X_{\sigma_{k^{*}}}-K)^{+})1_{\{\sigma_{k^{*}}\leq \tau\}}\}]
\end{split}
\end{equation}
The same argument used when $x\in(h^{*},k^{*})$ applies here to show
the opposite inequality in (\ref{Vinf-03}).  Thus,  $V(x)=V^{*}(x)$.  \\
\indent Finally, suppose $x\in[k^{*},\infty)$.  By Lemma \ref{lem-ineq}, we know
\begin{equation}
\begin{split}
&\mathbb{E}_{x}[e^{-r(\sigma \wedge
\tau_{[K,h^{*}]})}\{((X_{\tau_{[K,h^{*}]}}-K)^{+}+\delta)1_{\{\tau_{[K,h^{*}]}<\sigma\}}
+(X_{\sigma}-K)^{+}1_{\{\sigma\leq
\tau_{[K,h^{*}]}\}}\}] \\
\quad &\leq \mathbb{E}_{x}[e^{-r(\sigma \wedge \tau_{[K,h^{*}]}}V(X_{\sigma \wedge \tau_{[K,h^{*}]}})]
\end{split}
\end{equation} 
Using Lemma \ref{lemma-hk} and the optional sampling theorem for any $\sigma\in \mathcal{S}_{0,\infty}$ and $n\geq 1$, we have
\begin{equation}
\begin{split}
\mathbb{E}_{x}[e^{-r(\sigma \wedge \tau_{[K,h^{*}]}\wedge \tau_{n})}V(X_{\sigma \wedge \tau_{[K,h^{*}]}\wedge \tau_{n}})] &\leq V(x)
\end{split}
\end{equation}
By Fatou's Lemma and the continuity of $V$, we conclude
\begin{equation}
\begin{split}
&\mathbb{E}_{x}[e^{-r(\sigma \wedge
\tau_{[K,h^{*}]})}\{((X_{\tau_{[K,h^{*}]}}-K)^{+}+\delta)1_{\{\tau_{[K,h^{*}]}<\sigma\}}
+(X_{\sigma}-K)^{+}1_{\{\sigma\leq
\tau_{[K,h^{*}]}\}}\}] \\
\quad &\leq V(x)
\end{split}
\end{equation}
Taking the supremum over all stopping times $\sigma$ yields the first inequality in (\ref{pfstrat}).  Since $\sigma_{k^{*}}=0$, for any stopping time $\tau\in\mathcal{S}_{0,\infty}$,
\begin{equation}
\label{Vinf-04}
\begin{split}
V(x) &=\mathbb{E}_{x}[e^{-r(\sigma_{k^{*}}
\wedge \tau)}\{((X_{\tau}-K)^{+}+\delta)1_{\{\tau<\sigma_{k^{*}}\}}
+((X_{\sigma_{k^{*}}}-K)^{+})1_{\{\sigma_{k^{*}}\leq \tau\}}\}]
\end{split}
\end{equation}
Thus, we have
\begin{equation}
\label{Vinf-04}
\begin{split}
V(x) &=
\inf_{\tau\in\mathcal{S}_{0,\infty}}\mathbb{E}_{x}[e^{-r(\sigma_{k^{*}}
\wedge \tau)}\{((X_{\tau}-K)^{+}+\delta)1_{\{\tau<\sigma_{k^{*}}\}}
+((X_{\sigma_{k^{*}}}-K)^{+})1_{\{\sigma_{k^{*}}\leq \tau\}}\}]
\end{split}
\end{equation}
Thus, $V(x)=V^{*}(x)$.  This completes the proof.
\end{proof}

\begin{figure}[h!]
\centering \epsfig{file=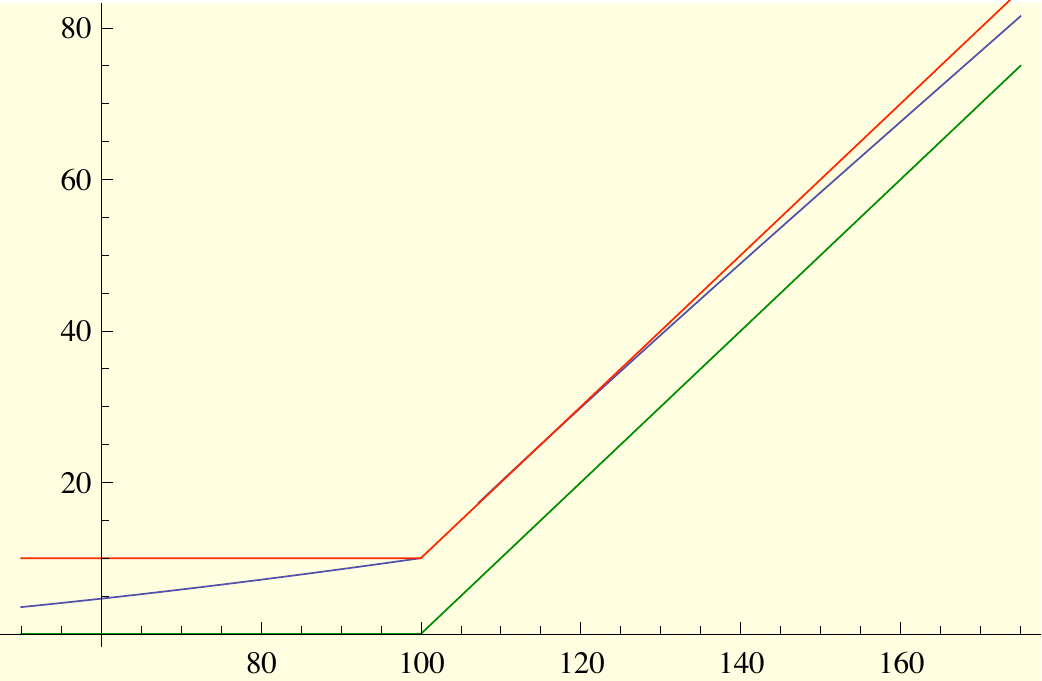,width=0.9\linewidth,clip=}
\caption[This figure displays the value function for the
$\delta$-penalty call option and the immediate exercise value
functions.]{This figure displays the value function for the
$\delta$-penalty call option and the immediate exercise value
functions on the interval $[50,175]$. Parameter values are:
$r=0.02$, $d=0.01$, $\delta=10$, $\sigma=0.2$, $h^{*}=107.50$,
$k^{*}=329.90$.}\label{PerCallRD01}
\end{figure}

\begin{figure}[h!]
\centering \epsfig{file=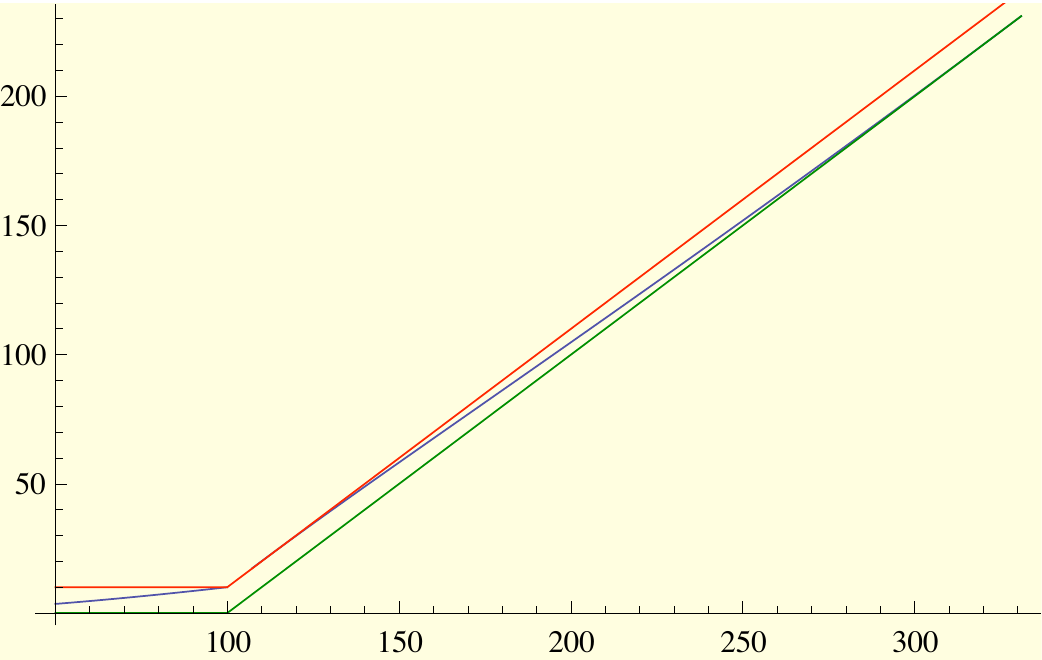,width=0.9\linewidth,clip=}
\caption[This figure displays the value function for the
$\delta$-penalty call option and the immediate exercise value
functions.]{This figure displays the value function for the
$\delta$-penalty call option and the immediate exercise value
functions on the interval $[50,330]$. Parameter values are:
$r=0.02$, $d=0.01$, $\delta=10$, $\sigma=0.2$, $h^{*}=107.50$,
$k^{*}=329.90$.}\label{PerCallRD02}
\end{figure}

\section{Numerical Results}
\label{CanNumRes} \indent  This section presents numerical results
pertaining to the $\delta$-penalty call option when $r>d$.  Recall,
when the interest rate exceeds the dividend yield, the price
function is not always a
convex function for all values of the underlying asset.\\
\indent Figure \ref{PerCallRD01} displays the value function for the
$\delta$-penalty call option and the immediate exercise value
functions on $[50,175]$ with parameter values: $r=0.02$, $d=0.01$,
$\sigma=0.2$, $K=100$, $\delta=10$.  We see that the value function
smoothly joins the upper immediate exercise value function at
$h^{*}=107.50$. Thus, the immediate cancellation region is the
interval $[100,107.50]$.  In addition, Figure \ref{PerCallRD02}
shows that the value function smoothly joins the lower immediate
exercise value function at $k^{*}=329.90$. Hence, the immediate
exercise region consists of the interval $[329.90,\infty)$.\\
\indent Our analysis in the previous section and the value function
featured in Figure \ref{PerCallRD01} highlight the fact that the
price of the $\delta$-penalty call option need not be a convex
function of the underlying asset even though the payoff is convex.
This result, though striking, is not unexpected from previous
analysis done on game-style options (see e.g. Ekstr\"{o}m (2006)).
Moreover, our results show that the $\delta$-penalty call option is
not necessarily non-decreasing in the volatility parameter.  Indeed,
Table \ref{tablePerpCall} shows that for asset values of $X=120,
130, 140, 150$, the $\delta$-penalty call option is decreasing in
volatility for model parameters $r=0.02$, $d=0.01$, $\delta=10$,
$K=100$.  Note this phenomenon occurs near non-convex pieces of the
value function.  Not surprisingly, this quality of the price
disappears as asset values approach $k^{*}$ and the value function
switches to being convex. Indeed, Table \ref{tablePerpCall}
indicates that prices are increasing in volatility when $X=280$ and
$X=290$. This numerical example highlights the close relationship
between the convexity of the price function and its monotonicity
with respect to the volatility
parameter.  \\
\indent  The price savings over a perpetual American Call option can
be substantial.  Since optimal cancellation occurs in an interval
with the strike as the left endpoint, we would expect the greatest
savings to occur close to this interval.  Indeed, we see from Table
\ref{tablePerpCall} that the cost savings to the investor of a
$\delta$-penalty call option is greatest at $X=120$ for any fixed
$\sigma$ value.  In fact, for $X=120$ and $\sigma=0.15$, the cost
savings of $26.5132$ represents nearly $89\%$ of the option value
and nearly $47\%$ of the regular American call option value.

\begin{table}
\[
\begin{tabular}{c}
\hline Table: Perpetual Cancellable Call, Perpetual American Call, Savings premia \\
\hline \multicolumn{1}{l}{
\begin{tabular}{l}
\begin{tabular}{lllll}
{\small Asset} & {\small Volatility} & {\small Canc.
Call} & {\small Amer. Call \ } & {\small Savings Premia \ } %
\end{tabular}
\\ \hline
\begin{tabular}{lllll}
& {\small \ \ \ \ 0.15 \ \ \ } & \ \ \ {\small 29.9499\ \ \ \ } & {\small %
56.4631} & \ {\small 26.5132}\\
{\small \ \ 120 } & {\small \ \ \ \ 0.20
 \ \ \ } & \ \ \ {\small
29.7883 \ \ } & {\small 64.3987} & \ {\small 34.6104
}\\
& {\small \ \ \ \ 0.25} & \ \ \ {\small 29.6394 \ \ } & {\small
71.4192 } & \ {\small 41.7798 } \\ &  &  &  &
\end{tabular}
\\
\begin{tabular}{lllll}
& {\small \ \ \ \ 0.15 \ \ \ } & \ \ \ {\small 39.5982
 \ \ \ } & {\small %
63.1082} & \ {\small 23.5100}\\
{\small \ \ 130  } & {\small \ \ \ \ 0.20 \ \ \ } & \ \ \ {\small 39.3874 \ \ } & {\small %
71.3509} & \ {\small 31.9635}\\
& {\small \ \ \ \ 0.25} & \ \ \ {\small 39.2417 \ \ } & {\small 78.6776 } & \ {\small 39.4359 }\\
&  &  &  &
\end{tabular}
\\
\begin{tabular}{lllll}
& {\small \ \ \ \ 0.15 \ \ \ } &  \ \ \ {\small 49.0096
 \ \ \ } &
{\small 69.9558} & {\small \ 20.9462}\\
{\small \ \ 140 } & {\small \ \ \ \ 0.20 \ \ \ } & \ \ \ {\small
48.8518 \ \ } &
{\small 78.4550} & {\small \ 29.6032}\\
& {\small \ \ \ \ 0.25} & \ \ \ {\small 48.7566 \ \ } & {\small 86.0539 } & \ {\small %
37.2973}\\
&  &  &  &
\end{tabular}
\\
\begin{tabular}{lllll}
& {\small \ \ \ \ 0.15 \ \ \ } & \ \ \ {\small 58.2716 \ \ \ } &
{\small 76.9971} & {\small \ 18.7255}\\
{\small \ \ 150 } & {\small \ \ \ \ 0.20 \ \ \ } & \ \ \ {\small 58.2283 \ \ } & {\small %
85.7032} & {\small \ 27.4749}\\
& {\small \ \ \ \ 0.25} & \ \ \ {\small 58.2134 \ \ } & {\small
93.5413 } & \
{\small 35.3279}\\
&  &  &  &
\end{tabular}
\\
\begin{tabular}{lllll}
& {\small  \ \ \ \ 0.15 \ \ \ } & \ \ \ {\small 180.1030 \ \ } &
{\small 183.3450} & {\small \ 3.2420}\\
{\small \ \ 280 } & {\small \ \ \ \ 0.20 \ \ \ } & \ \ \ {\small 180.7380 \ } & {\small %
190.6220} & {\small \ 9.8840}\\
& {\small \ \ \ \ 0.25} & \ \ \ {\small 181.4580 \ } & {\small
198.9720} &
{\small 17.5140}\\
&  &  &  &
\end{tabular}
\\
\begin{tabular}{lllll}
& {\small  \ \ \ \ 0.15 \ \ \ } & \ \ \ {\small 190.0100 \ \ } &
{\small 192.5100} & {\small \ 2.5000}\\
{\small \ \ 290 } & {\small \ \ \ \ 0.20 \ \ \ } & \ \ \ {\small
190.4730 \
} & {\small %
199.3850} & \ {\small  8.9120}\\
& {\small \ \ \ \ 0.25} & \ \ \ {\small 191.1390 \ } & {\small
207.5970} &
{\small 16.4580}%
\end{tabular}
\end{tabular}
}
\end{tabular}
\]
\caption[Table: Perpetual Cancellable Call Option, Perpetual Call
Option, Savings premia]{\noindent \underline{{\small Note}}{\small :
Columns 1 and 2 give the
underlying asset price }$X${\small \ and its return volatility }$\sigma $%
{\small . Columns 3 and 4 provide the $\delta$-penalty Call Option
price and the Perpetual American Call Option price. Column 5 is the
savings from purchasing a $\delta$-penalty call over a Perpetual
American call.  Parameter values are }$r=0.02$, $d=0.01$,
$\delta=10$, $K=100,$.  $\sigma=0.15 \ \Rightarrow h^{*}=115.0460, \
k^{*}=294.5790$.  $\sigma=0.20 \ \Rightarrow h^{*}=107.4860, \
k^{*}=329.8960$. $\sigma=0.25 \ \Rightarrow h^{*}=101.0210, \
k^{*}=365.7920$.}\label{tablePerpCall}
\end{table}

\section{Conclusion}
\label{CanConclusion}

The above discussion presents the valuation of the perpetual
$\delta$-penalty call option.  This analysis follows the work done
in Kyprianou (2004) with respect to the perpetual $\delta$-penalty
put option.  We find that the solution to the problem differs
considerably depending on the relative values of the interest rate
and dividend yield for the underlying asset. Specifically, when
$r\leq d$, analogous arguments to Kyprianou (2004) identify the
explicit solution to the valuation problem. Namely, the value of the
claim corresponds to its price under the policy of exercising at the
first time the underlying asset reaches an optimally chosen value
$k^{*}$ and under the policy of terminating the contract when the
asset value first reaches the strike price $K$.  In addition, the
value function is a convex function of the underlying asset price.
When $r>d$, the optimal cancellation region no longer is the
singleton $\{K\}$ in general.  Instead, it consists of an interval
of the form $[K,h^{*}]$; where $h^{*}$ must be determined as
part of the solution.  We show
that $h^{*}$ and $k^{*}$ respects two natural bounds. Namely, the optimal termination point satisfies
$h^{*}\leq \frac{r(K-\delta)}{d}$ and the optimal exercise
point satisfies $k^{*}\geq \frac{r}{d}K$.  In addition, smooth-pasting
holds both at the holder's optimal exercise boundary value $k^{*}$
and the writer's cancellation value $h^{*}$. This
striking result implies that the price is not a convex function for
all values of the underlying asset. Further, numerical solutions for
the valuation problem show that the value function is not
necessarily non-decreasing in the volatility parameter.  This phenomenon
directly relates to the existence of non-convex pieces of the value
function. Finally, we observe significant price savings over the
perpetual call option.  This savings might be especially appealing
to purchasers seeking a call option position who are willing to
assume the risk of cancellation.

\section{Appendix}
\label{Appendix}
\subsection{Appendix for Section \ref{val-dr}}
\begin{prpn}
$\delta(k)$ is an increasing function.
\end{prpn}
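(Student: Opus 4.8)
The plan is to use the parametrization already implicit in the construction of (\ref{dr-other}): for each $k>K$, let $w_{k}$ be the unique solution of $\mathcal{L}w=rw$ on $(0,\infty)$ pinned down by the two terminal conditions $w_{k}(k)=k-K$ and $w_{k}'(k)=1$ — so $w_{k}=A_{k}\psi+B_{k}\varphi$ with $A_{k},B_{k}$ obtained from those conditions by Cramer's rule — and set $\delta(k):=w_{k}(K)$. Since (\ref{FOC1}) is precisely the equation $w_{k}(K)=\delta$, this is the function in the statement. The determinant of the linear system for $(A_{k},B_{k})$ is $\psi(k)\varphi'(k)-\varphi(k)\psi'(k)=-B\,S'(k)\neq0$, so $A_{k},B_{k}$ (hence $w_{k}$) depend smoothly on $k$, and consequently $\delta'(k)=u(K)$, where $u:=\partial_{k}w_{k}$.

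First I would observe that $u$ again solves $\mathcal{L}u=ru$, so it is enough to pin $u$ down near $x=k$ and then evaluate at $x=K$. Differentiating $w_{k}(k)=k-K$ in $k$ and using $w_{k}'(k)=1$ gives $u(k)=0$; differentiating $w_{k}'(k)=1$ in $k$ gives $u'(k)=-w_{k}''(k)$. Feeding $w_{k}(k)=k-K$, $w_{k}'(k)=1$ into the ODE at $x=k$ gives $\tfrac12\sigma^{2}k^{2}w_{k}''(k)=r(k-K)-(r-d)k=dk-rK$. This is where the standing hypothesis $r\le d$ enters essentially: since $0<r\le d$ and $k>K>0$, we have $dk-rK=(d-r)k+r(k-K)>0$, hence $w_{k}''(k)>0$ and $u'(k)<0$ for every $k>K$.

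Since $u$ solves $\mathcal{L}u=ru$ and vanishes at $k$, it is a scalar multiple of $\hat{\psi}_{k}$ (which also vanishes at $k$), say $u=a(k)\hat{\psi}_{k}$. Now $\hat{\psi}_{k}'(k)=\psi'(k)-\tfrac{\psi(k)}{\varphi(k)}\varphi'(k)=\tfrac{\psi'(k)\varphi(k)-\psi(k)\varphi'(k)}{\varphi(k)}=\tfrac{B\,S'(k)}{\varphi(k)}>0$, so $u'(k)<0$ forces $a(k)<0$. Also $\hat{\psi}_{k}(K)=\varphi(K)\bigl(\tfrac{\psi(K)}{\varphi(K)}-\tfrac{\psi(k)}{\varphi(k)}\bigr)<0$, since $\psi/\varphi=x^{\eta-\nu}=x^{2\lambda/\sigma}$ is strictly increasing and $K<k$. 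Therefore $\delta'(k)=u(K)=a(k)\,\hat{\psi}_{k}(K)>0$ for all $k>K$; combined with continuity and $\delta(K)=0$, this shows $\delta$ is strictly increasing on $[K,\infty)$ — in particular on $[K,b]$, where $\delta(b)=v^{c}(K)=\delta^{*}$ because there $w_{b}$ coincides with $v^{c}$.

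The work above is mostly sign bookkeeping once the parametrization is set up; the inputs are only $\psi(x)=x^{\eta}=x^{\lambda/\sigma-\kappa}$, $\varphi(x)=x^{\nu}=x^{-\lambda/\sigma-\kappa}$, the identity $\psi'\varphi-\varphi'\psi=B\,S'$ with $B,S'>0$, and $0<r\le d$, $K<k$. The one point I expect to need an explicit sentence is the claim that (\ref{FOC1}) really is the equation $w_{k}(K)=\delta$ for this $w_{k}$; but that is exactly the computation behind (\ref{dr-other}) in the proof of Theorem~\ref{prop1}, so nothing new is required. A purely computational alternative would be to solve (\ref{FOC1}) for $\delta$ as an explicit function of $k$ and differentiate, but I would avoid that: the parameter-differentiation route is shorter and makes the role of $r\le d$ transparent.
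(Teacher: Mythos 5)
Your argument is correct, but it is a genuinely different proof from the one in the paper. The paper solves (\ref{FOC1}) explicitly for $\delta(k)$, differentiates the resulting closed-form expression, and after a chain of algebraic reductions (via the substitution $y=k/K$) boils the sign of $\delta'(k)$ down to the inequality $\frac{\lambda^{2}-\kappa^{2}\sigma^{2}}{\lambda^{2}-\kappa^{2}\sigma^{2}-\sigma^{2}(1+2\kappa)}\leq 1$, which holds because $r\leq d$ forces $1+2\kappa\leq 0$. You instead differentiate with respect to the parameter: writing $\delta(k)=w_{k}(K)$ with $w_{k}$ the solution of $\mathcal{L}w=rw$ pinned by $w_{k}(k)=k-K$, $w_{k}'(k)=1$, the function $u=\partial_{k}w_{k}$ again solves the ODE, satisfies $u(k)=0$ and $u'(k)=-w_{k}''(k)=-\tfrac{2(dk-rK)}{\sigma^{2}k^{2}}<0$ (this is exactly where $r\leq d$ and $k>K$ enter), hence is a negative multiple of $\hat{\psi}_{k}$, which is itself negative at $K$ because $\psi/\varphi=x^{2\lambda/\sigma}$ is increasing; so $\delta'(k)=u(K)>0$. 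I checked the sign bookkeeping ($\hat{\psi}_{k}'(k)=BS'(k)/\varphi(k)>0$, $\hat{\varphi}(K)>0$, Wronskian positive) and it is all consistent with the paper's conventions. What your route buys is brevity, independence from the explicit formula for $\delta(k)$, transparency about where $r\leq d$ is used (the drift term $dk-rK$ at the exercise boundary), and immediate portability to more general diffusions; what the paper's route buys is that it never has to argue that (\ref{FOC1}) is equivalent to $w_{k}(K)=\delta$, since it works directly with (\ref{FOC1}). That identification is the one step you should spell out in a full write-up, but as you note it is exactly the computation already performed to obtain (\ref{dr-other}), so there is no gap.
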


\begin{proof}
Solving equation (\ref{FOC1}) for $\delta$, we find 
\begin{equation}
\begin{split}
\delta(k)=-\frac{\left(\frac{K}{k}\right)^{-\frac{\lambda+\kappa\sigma}{\sigma}}(k(1-\left(\frac{k}{K}\right)^{-\frac{2\lambda}{\sigma}})\sigma-(k-K)(\lambda-\kappa\sigma+\left(\frac{K}{k}\right)^{\frac{2\lambda}{\sigma}}(\lambda+\kappa\sigma))}{2\lambda}
\end{split}
\end{equation}
Taking a derivative and simplifying yields

\begin{equation}
\begin{split}
\delta'(k)&=\frac{1}{2k\lambda\sigma}\left(\frac{k}{K}\right)^{-\frac{2\lambda}{\sigma}}\left(\frac{K}{k}\right)^{-\frac{\lambda+\kappa\sigma}{\sigma}}\times\Bigg(\left(\frac{k}{K}\right)^{\frac{2\lambda}{\sigma}}K\left(-1+\left(\frac{K}{k}\right)^{\frac{2\lambda}{\sigma}}\right)(\lambda^{2}-\kappa^{2}\sigma^{2})\\
& \quad \quad - k(\lambda-(1+\kappa)\sigma)\left(\sigma+\left(\frac{k}{K}\right)^{\frac{2\lambda}{\sigma}}\left(-\lambda -(1+\kappa)\sigma+\left(\frac{K}{k}\right)^{\frac{2\lambda}{\sigma}}(\lambda+\kappa\sigma)\right)\right)\Bigg)
\end{split}
\end{equation}
We now show that the derivative is non-negative.  We can neglect the first three factors of the above derivative since they are all positive.  From this point, we will utilize the substitution $y:=\frac{k}{K}$ to ease notation.  At this point, we want to show

\begin{equation}
\begin{split}
&\left(-1+\left(\frac{1}{y}\right)^{\frac{2\lambda}{\sigma}}\right)y^{\frac{2\lambda}{\sigma}}(\lambda^{2}-\kappa^{2}\sigma^{2})- y(\lambda-(1+\kappa)\sigma)\\
& \quad \quad \quad \times \left(\sigma+y^{\frac{2\lambda}{\sigma}}\left(-\lambda -(1+\kappa)\sigma+\left(\frac{1}{y}\right)^{\frac{2\lambda}{\sigma}}(\lambda+\kappa\sigma)\right)\right) \geq 0
\end{split}
\end{equation}

This is equivalent to showing 

\begin{equation}
\begin{split}
&\frac{\left(-1+\left(\frac{1}{y}\right)^{\frac{2\lambda}{\sigma}}\right)y^{-1+\frac{2\lambda}{\sigma}}(\lambda^{2}-\kappa^{2}\sigma^{2})}{(\lambda-(1+\kappa)\sigma)
\times \left(\sigma+y^{\frac{2\lambda}{\sigma}}\left(-\lambda -(1+\kappa)\sigma+\left(\frac{1}{y}\right)^{\frac{2\lambda}{\sigma}}(\lambda+\kappa\sigma)\right)\right)} \leq 1
\end{split}
\end{equation}

Since $y\geq 1$, it suffices to show 

\begin{equation}
\begin{split}
&\frac{\left(-1+\left(\frac{1}{y}\right)^{\frac{2\lambda}{\sigma}}\right)y^{\frac{2\lambda}{\sigma}}(\lambda^{2}-\kappa^{2}\sigma^{2})}{(\lambda-(1+\kappa)\sigma)
\times \left(\sigma+y^{\frac{2\lambda}{\sigma}}\left(-\lambda -(1+\kappa)\sigma+\left(\frac{1}{y}\right)^{\frac{2\lambda}{\sigma}}(\lambda+\kappa\sigma)\right)\right)} \leq 1
\end{split}
\end{equation}

Or equivalently show,

\begin{equation}
\begin{split}
&\frac{\left(1-\left(\frac{1}{y}\right)^{\frac{2\lambda}{\sigma}}\right)y^{\frac{2\lambda}{\sigma}}(\lambda^{2}-\kappa^{2}\sigma^{2})}{(\lambda-(1+\kappa)\sigma)
\times \left(-\sigma+y^{\frac{2\lambda}{\sigma}}\left(\lambda +(1+\kappa)\sigma-\left(\frac{1}{y}\right)^{\frac{2\lambda}{\sigma}}(\lambda+\kappa\sigma)\right)\right)} \leq 1
\end{split}
\end{equation}

Algebraic manipulations of the left-hand side produce

\begin{equation}
\begin{split}
&\frac{\left(y^{\frac{2\lambda}{\sigma}}-1\right)(\lambda^{2}-\kappa^{2}\sigma^{2})}{(\lambda-(1+\kappa)\sigma)
\times(\lambda+\kappa\sigma+\sigma)\times \left(y^{\frac{2\lambda}{\sigma}}-1\right)}\\
&\Leftrightarrow\frac{\lambda^{2}-\kappa^{2}\sigma^{2}}{(\lambda-(1+\kappa)\sigma)
\times(\lambda+\kappa\sigma+\sigma)}\\
&\Leftrightarrow \frac{\lambda^{2}-\kappa^{2}\sigma^{2}}{\lambda^{2}-\kappa^{2}\sigma^{2}-\sigma^{2}(1+2\kappa)}
\end{split}
\end{equation}

Since $\kappa:=\frac{r-d-\frac{\sigma^{2}}{2}}{\sigma^{2}}$ and $r\leq d$, it follows that $-\sigma(1+2\kappa)\geq 0$.  Thus, the left-hand side is less than or equal to $1$ and the proof is complete. 
\end{proof}
\begin{prpn}
The function $f(x)$ satisfies 
\begin{equation}
f(x)\geq 0, x \in [K,b]
\end{equation}
\end{prpn}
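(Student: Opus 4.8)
The plan is to show that $f$, despite its appearance, collapses to a product of three manifestly signed quantities once one normalizes $x=Ky$ as in the preceding proposition. Set $c:=\frac{\lambda}{\sigma}-\kappa$ and $s:=\frac{\lambda}{\sigma}+\kappa$, so that $c+s=\frac{2\lambda}{\sigma}$, the exponent $\frac{\lambda}{\sigma}-\kappa$ in the first term of $f$ is $c$, and $\frac{\lambda+\kappa\sigma}{\sigma}=s$. The quantity $c$ is precisely the positive root $\eta$ introduced in Section~\ref{val-rd}, and it satisfies $c>1$: indeed $c=\eta$ obeys $\sigma^{2}c(c-1)=2r(1-c)+2dc$, and if $c\le 1$ the right side is strictly positive (using $r>0$ and $r\le d$, so $d>0$) while the left side is $\le 0$, a contradiction. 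Consequently $b=\frac{c}{c-1}K$ and $b/K>1$ are well defined, and $x\in[K,b]$ is equivalent to $y\in\big[1,\tfrac{c}{c-1}\big]$.

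First I would substitute $x=Ky$ into the definition of $f$, factor out $\frac{K\sigma}{2\lambda}$, and expand the bracket; collecting the four monomials $y^{1-c},y^{-c},y^{s+1},y^{s}$ gives
\[
f(Ky)=\frac{K\sigma}{2\lambda}\Big[(c-1)\big(y^{1-c}-y^{s+1}\big)+c\big(y^{s}-y^{-c}\big)\Big].
\]
Pulling $y^{1-c}$ out of the first difference and $y^{-c}$ out of the second, and using $s+1-(1-c)=s+c$, both differences acquire the common factor $y^{s+c}-1$ up to sign, and everything telescopes to
\[
f(Ky)=\frac{K\sigma}{2\lambda}\,y^{-c}\big(y^{\,2\lambda/\sigma}-1\big)\big(c-(c-1)y\big).
\]
Establishing this identity is the only substantive step; it is routine but bookkeeping-heavy.

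Granting the identity, the conclusion is immediate for $y\in\big[1,\tfrac{c}{c-1}\big]$: the prefactor $\frac{K\sigma}{2\lambda}y^{-c}$ is strictly positive; $y^{\,2\lambda/\sigma}-1\ge 0$ because $y\ge 1$ and $2\lambda/\sigma>0$; and the linear factor $c-(c-1)y$ is nonnegative because $c-1>0$ and $y\le \frac{c}{c-1}=\frac{b}{K}$. Hence $f\ge 0$ on $[K,b]$, with equality exactly at $x=K$ (where $y^{2\lambda/\sigma}-1=0$) and at $x=b$ (where $c-(c-1)y=0$); this matches $f(K)=0$ and $f(b)=v^{c}(K)-v^{c}(K)=0$, which serves as a convenient check on the expansion. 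The main obstacle, such as it is, is purely computational — recognizing the collapse to the three-factor product — and a good way to organize it is to first record that $f(x)=(x-K)\big(\tfrac{K}{x}\big)^{\,c}-\delta(x)$, since the second term in the definition of $f$ is exactly $-\delta(x)$ for $\delta$ as in the previous proposition, and then insert that closed form of $\delta$.
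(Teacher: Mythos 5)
Your proof is correct, and it organizes the algebra differently from the paper. Both arguments begin the same way (normalize $y=x/K$ and strip off a manifestly positive prefactor), but the paper then proceeds by a chain of equivalent \emph{inequalities}: after checking the endpoints $y=1$ and $y=b/K$ separately, it divides through by $(y-1)K>0$, cancels the common positive factor $1-y^{-2\lambda/\sigma}$, and reduces the claim to $\tfrac{y}{y-1}\sigma\ge\lambda-\kappa\sigma$, i.e.\ $y\le\frac{\lambda-\kappa\sigma}{\lambda-\kappa\sigma-\sigma}=b/K$, which is the standing hypothesis. You instead prove an exact \emph{identity},
\begin{equation*}
f(Ky)=\frac{K\sigma}{2\lambda}\,y^{-c}\bigl(y^{2\lambda/\sigma}-1\bigr)\bigl(c-(c-1)y\bigr),\qquad c=\tfrac{\lambda}{\sigma}-\kappa ,
\end{equation*}
and read off the sign factor by factor. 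I checked the computation you deferred: with $s=\tfrac{\lambda}{\sigma}+\kappa$ and $\tfrac{K\sigma}{2\lambda}=\tfrac{K}{c+s}$, the bracket does collapse to $(c-1)(y^{1-c}-y^{s+1})+c(y^{s}-y^{-c})$, which equals $(y^{s}-y^{-c})(c-(c-1)y)$, so the identity holds; and your argument that $c=\eta>1$ (from $\sigma^{2}c(c-1)=2r(1-c)+2dc$ together with $0<r\le d$, hence $d>0$) is sound and is needed for $b$ to make sense. What your route buys is that no case-splitting or division by possibly vanishing quantities is required, the inequality is strict in the interior, and the zeros of $f$ at $x=K$ and $x=b$ are exhibited explicitly, matching $\delta(K)=0$ and $\delta(b)=v^{c}(K)$ from the adjacent proposition; what the paper's route buys is that it avoids having to guess or verify a closed-form factorization. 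Substantively the two proofs are doing the same cancellation, but yours is the cleaner packaging.
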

\begin{proof}
Algebraic simplification yields 
\begin{equation}
\begin{split}
f(x)&=
\frac{1}{2}\left(\frac{x}{K}\right)^{\frac{\lambda}{\sigma}+\kappa}\\
&\left(2\left(\frac{x}{K}\right)^{-\frac{2\lambda}{\sigma}}(x-K)+\frac{x\left(1-\left(\frac{x}{K}\right)^{-\frac{2\lambda}{\sigma}}\right)\sigma
-(x-K)\left(\lambda-\kappa\sigma+\left(\frac{x}{K}\right)^{-\frac{2\lambda}{\sigma}}
(\lambda+\kappa\sigma)\right)}{\lambda}\right)
\end{split}
\end{equation}
Since the first factor in the above expression is positive we can
discard this from our analysis.  Now, multiplying throughout by
$\lambda$ leads us to showing the following condition holds for $x
\in [K,b]$.
\begin{equation}
\begin{split}
&2\left(\frac{x}{K}\right)^{-\frac{2\lambda}{\sigma}}(x-K)\lambda+x\left(1-\left(\frac{x}{K}\right)^{-\frac{2\lambda}{\sigma}}\right)\sigma
\geq
(x-K)\left(\lambda-\kappa\sigma+\left(\frac{x}{K}\right)^{-\frac{2\lambda}{\sigma}}
(\lambda+\kappa\sigma)\right)
\end{split}
\end{equation}
In order to further simplify this inequality, we make the
substitution $y:=\frac{x}{K}$.  This yields the following inequality
\begin{equation}
\begin{split}
&2y^{-\frac{2\lambda}{\sigma}}(y-1)K\lambda+y\left(1-y^{-\frac{2\lambda}{\sigma}}\right)K\sigma
\geq (y-1)K\left(\lambda-\kappa\sigma+y^{-\frac{2\lambda}{\sigma}}
(\lambda+\kappa\sigma)\right)
\end{split}
\end{equation}
for $1\leq y \leq \frac{b}{K}$.  Recall, $y=1$ and $y=\frac{b}{K}$
both satisfy this inequality.  As a result, let us consider
$1<y<\frac{b}{K}=
\frac{\lambda-\kappa\sigma}{\lambda-\kappa\sigma-\sigma}$.  The
condition now can be reduced to showing
\begin{equation}
\begin{split}
&2y^{-\frac{2\lambda}{\sigma}}\lambda+\frac{y}{y-1}\left(1-y^{-\frac{2\lambda}{\sigma}}\right)\sigma
\geq \lambda-\kappa\sigma+y^{-\frac{2\lambda}{\sigma}}
(\lambda+\kappa\sigma)
\end{split}
\end{equation}
for $1<y<\frac{\lambda-\kappa\sigma}{\lambda-\kappa\sigma-\sigma}$.
Straightforward algebra shows the following sequence of relations
can all be deduced from each other.
\begin{equation}
\begin{split}
y^{-\frac{2\lambda}{\sigma}}(2\lambda-\lambda-\kappa\sigma)
+\left(\frac{y}{y-1}\right)(1-y^{-\frac{2\lambda}{\sigma}})\sigma
&\geq \lambda-\kappa\sigma\\
y^{-\frac{2\lambda}{\sigma}}(\lambda-\kappa\sigma)
+\left(\frac{y}{y-1}\right)(1-y^{-\frac{2\lambda}{\sigma}})\sigma
&\geq \lambda-\kappa\sigma\\
\left(\frac{y}{y-1}\right)(1-y^{-\frac{2\lambda}{\sigma}})\sigma
&\geq(\lambda-\kappa\sigma)(1-y^{-\frac{2\lambda}{\sigma}})\\
\frac{y}{y-1}\sigma &\geq \lambda-\kappa\sigma\\
y\sigma &\geq \lambda y -\kappa\sigma y -\lambda+\kappa\sigma\\
\lambda-\kappa\sigma &\geq y(\lambda-\kappa\sigma-\sigma)\\
y &\leq \frac{\lambda-\kappa\sigma}{\lambda-\kappa\sigma-\sigma}
\end{split}
\end{equation}
Notice the last inequality is precisely the case under
consideration.  Thus, all of the above inequalities are true and
we have shown $f(x)\geq 0$ for $x\in[K,b]$. 
\end{proof}

\subsection{Appendix for Section \ref{val-rd}}
 \begin{lem}
 A pair $(h^{*},k^{*})$ solving the equations (\ref{opt-hk}) with $K<h^{*}<k^{*}$ satisfy $h^{*}<\frac{r(K-\delta)}{d}$ and $k^{*}>\frac{r}{d}K$.
 \end{lem}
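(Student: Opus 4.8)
The plan is to recast the equations (\ref{opt-hk}) as smooth-fit conditions and then run a one-sided maximum principle against the two obstacle functions $g_{1}(x):=(x-K)^{+}$ and $g_{2}(x):=(x-K)^{+}+\delta$. First observe that the branch of (\ref{valuefunction}) defining $V$ on $(h^{*},k^{*})$ is a linear combination of $\psi(x)=x^{\eta}$ and $\varphi(x)=x^{\nu}$, hence real-analytic and $r$-harmonic ($\mathcal{L}V=rV$) there, and by construction (letting $x\downarrow h^{*}$ and $x\uparrow k^{*}$ in the probabilistic representation) it satisfies the continuous-fit relations $V(h^{*})=(h^{*}-K)+\delta$ and $V(k^{*})=k^{*}-K$. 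A direct Wronskian computation, using $\psi'\varphi-\varphi'\psi=BS'$ together with the definitions of $\hat{\psi}_{h^{*}}$ and $\hat{\varphi}_{k^{*}}$, then shows that the two equations in (\ref{opt-hk}) are precisely the smooth-fit conditions $V'(h^{*})=1=V'(k^{*})$. A routine computation gives $(\mathcal{L}g_{1}-rg_{1})(x)=rK-dx$ and $(\mathcal{L}g_{2}-rg_{2})(x)=rK-dx-r\delta$ for $x>K$. We assume $d>0$; when $d=0$ the asserted bounds are vacuous.

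To prove $k^{*}>\tfrac{r}{d}K$, put $U:=V-g_{1}$ on $[h^{*},k^{*}]$; then $U$ is analytic, $U(h^{*})=\delta$, $U(k^{*})=0$, $U'(k^{*})=0$, and $\mathcal{L}U-rU=dx-rK$ on $(h^{*},k^{*})$. Argue by contradiction: if $k^{*}\leq\tfrac{r}{d}K$, then $dx-rK<0$ on all of $(h^{*},k^{*})$, so $U$ cannot attain a value $\leq0$ at an interior point (there $U'=0$ and $U''\geq0$ would give $\mathcal{L}U-rU=\tfrac12\sigma^{2}x^{2}U''-rU\geq0$, impossible); since $U(h^{*})=\delta>0$ and $U(k^{*})=0$, this forces $U>0$ on $[h^{*},k^{*})$. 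But evaluating the ODE at $k^{*}$ with $U(k^{*})=U'(k^{*})=0$ gives $\tfrac12\sigma^{2}(k^{*})^{2}U''(k^{*})=dk^{*}-rK\leq0$: if this is strict then $U''(k^{*})<0$ and $U<0$ near $k^{*}$, while if $k^{*}=\tfrac{r}{d}K$ then $U''(k^{*})=0$ and differentiating the ODE once gives $\tfrac12\sigma^{2}(k^{*})^{2}U'''(k^{*})=d>0$, whence $U(x)\sim\tfrac16 U'''(k^{*})(x-k^{*})^{3}<0$ as $x\uparrow k^{*}$. Both alternatives contradict $U>0$ on $[h^{*},k^{*})$, so $k^{*}>\tfrac{r}{d}K$.

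The bound $h^{*}<\tfrac{r(K-\delta)}{d}$ follows by the mirror argument applied to $W:=V-g_{2}$ on $[h^{*},k^{*}]$: here $W$ is analytic, $W(h^{*})=0$, $W'(h^{*})=0$, $W(k^{*})=-\delta$, and $\mathcal{L}W-rW=dx-r(K-\delta)$ on $(h^{*},k^{*})$. If $h^{*}\geq\tfrac{r(K-\delta)}{d}$, this right-hand side is positive on $(h^{*},k^{*})$; the maximum principle (no interior point can carry a value $\geq0$) together with $W(h^{*})=0$ and $W(k^{*})=-\delta<0$ forces $W<0$ on $(h^{*},k^{*}]$, whereas the ODE at $h^{*}$ gives $\tfrac12\sigma^{2}(h^{*})^{2}W''(h^{*})=dh^{*}-r(K-\delta)\geq0$, so $W>0$ just to the right of $h^{*}$ (using, in the borderline case $h^{*}=\tfrac{r(K-\delta)}{d}$, the relation $\tfrac12\sigma^{2}(h^{*})^{2}W'''(h^{*})=d>0$). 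This contradiction yields $h^{*}<\tfrac{r(K-\delta)}{d}$.

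The step I expect to be the main obstacle is making the maximum-principle argument airtight: ruling out an interior nonpositive minimum of $U$ (respectively a nonnegative maximum of $W$) and then correctly reading off the one-sided sign of $U$ near its double zero at $k^{*}$ (of $W$ near $h^{*}$). The only genuinely delicate point is the borderline case, where $U''(k^{*})=0$ (or $W''(h^{*})=0$) and one must differentiate the governing second-order ODE once more to locate the first non-vanishing Taylor coefficient. Everything else --- the identification of (\ref{opt-hk}) with smooth fit, the continuous-fit limits, and the signs of $\mathcal{L}g_{i}-rg_{i}$ --- is routine.
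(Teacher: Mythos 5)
Your proof is correct, but it takes a different route from the paper's. Both arguments pivot on the same sign computation, namely that $(\mathcal{L}g_{1}-rg_{1})(x)=rK-dx$ changes sign at $\frac{r}{d}K$ and $(\mathcal{L}g_{2}-rg_{2})(x)=rK-dx-r\delta$ changes sign at $\frac{r(K-\delta)}{d}$. From there the paper, following Alvarez (2008), integrates the Wronskian-type identity $\frac{\mathrm{d}}{\mathrm{d}x}\bigl(\frac{g'}{S'}\hat{u}-\frac{\hat{u}'}{S'}g\bigr)=(\mathcal{L}g-rg)\,\hat{u}\,m'$ to rewrite (\ref{opt-hk}) as integral equations, and then locates the admissible roots by observing that the resulting functions $L_{1}(h)$ and $L_{2}(k)$ are first monotone in one direction and then the other, with a fixed sign at one endpoint; this is an integrated form of your argument and has the side benefit of giving uniqueness-type information about the roots for fixed $k$ (resp.\ $h$). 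You instead read (\ref{opt-hk}) directly as the smooth-fit conditions $V'(h^{*})=V'(k^{*})=1$ (your Wronskian identification is correct: using $\hat{\psi}_{h}'\hat{\varphi}_{k}-\hat{\varphi}_{k}'\hat{\psi}_{h}=(1-\tfrac{\psi(h)\varphi(k)}{\varphi(h)\psi(k)})BS'$ one checks that $V'(h^{*})=1$ is exactly the first equation of (\ref{opt-hk}), and likewise at $k^{*}$), and then run a pointwise maximum principle on $U=V-g_{1}$ and $W=V-g_{2}$, finishing with a Taylor expansion at the double zero, including the borderline case via one differentiation of the ODE. This is more elementary and self-contained, at the cost of having to argue the smooth-fit equivalence and handle the degenerate Taylor coefficient by hand. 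One presentational nit: your parenthetical ``$U$ cannot attain a value $\leq 0$ at an interior point'' should be phrased via the global minimum --- an interior point with $U\leq 0$ forces the global minimum over $[h^{*},k^{*}]$ to be attained at an interior point with nonpositive value (since $U(h^{*})=\delta>0$ and $U(k^{*})=0$), and it is at that minimizer that $U'=0$, $U''\geq 0$, $-rU\geq 0$ contradict $\mathcal{L}U-rU<0$. The same remark applies to the maximum of $W$. Neither point is a gap.
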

 
 \begin{proof}
 The following argument is inspired by the proof of Theorem 4.3 in Alvarez (2008).  Let $g_{1}(x):=(x-K)^{+}$ and $g_{2}(x):=(x-K)^{+}+\delta$.  First, note that for $K<x<\frac{r}{d}K$, $(\mathcal{L}g_{1}-r g_{1})(x)>0$; for $x=\frac{r}{d}K$, $(\mathcal{L}g_{1}-r g_{1})(x)=0$; for $x>\frac{r}{d}K$, $(\mathcal{L}g_{1}-r g_{1})(x)<0$.  Second, note that for $K<x<\frac{r(K-\delta)}{d}$, $(\mathcal{L}g_{2}-r g_{2})(x)>0$; for $x=\frac{r(K-\delta)}{d}$, $(\mathcal{L}g_{2}-r g_{2})(x)=0$; for $x>\frac{r(K-\delta)}{d}$, $(\mathcal{L}g_{2}-r g_{2})(x)<0$.  Third, notice 
 \begin{eqnarray}
 \label{deriv-eq}
 \begin{split}
 \frac{\mathrm{d}}{\mathrm{d}x}\left(    \frac{g'_{2}(x)}{S'(x)}\hat{\varphi}_{k}(x)-\frac{\hat{\varphi}'_{k}(x)}{S'(x)}g_{2}(x) \right)&=(\mathcal{L}g_{2}-r g_{2})(x)\hat{\varphi}_{k}(x)m'(x)\\
 \frac{\mathrm{d}}{\mathrm{d}x}\left(    \frac{g'_{1}(x)}{S'(x)}\hat{\psi}_{h}(x)-\frac{\hat{\psi}'_{h}(x)}{S'(x)}g_{1}(x) \right)&=(\mathcal{L}g_{1}-r g_{1})(x)\hat{\psi}_{h}(x)m'(x)
 \end{split}
 \end{eqnarray}
 Thus, equations (\ref{opt-hk}) can be re-expressed as 
 
 \begin{eqnarray}
 \begin{split}
 \label{opt-Alv}
 B^{-1}\int_{h^{*}}^{k^{*}}(\mathcal{L}g_{2}-r g_{2})(x)\psi(k^{*})\hat{\varphi}_{k^{*}}(x)m'(x)\mathrm{d}x &=g_{2}(k^{*})-g_{1}(k^{*})\\
 B^{-1}\int_{h^{*}}^{k^{*}}(\mathcal{L}g_{1}-r g_{1})(x)\varphi(h^{*})\hat{\psi}_{h^{*}}(x)m'(x)\mathrm{d}x &=g_{1}(h^{*})-g_{2}(h^{*})
 \end{split}
 \end{eqnarray}
 
 Now consider, for any fixed $k>K$, the function
 
 \begin{equation}
 \begin{split}
 L_{1}(h):=\frac{B}{\psi(k)}(g_{2}(k)-g_{1}(k))-\int_{h}^{k}(\mathcal{L}g_{2}-r g_{2})(x)\hat{\varphi}_{k}(x)m'(x)\mathrm{d}x 
 \end{split}
 \end{equation} 
 Notice $L_{1}(k)>0$ and $L_{1}(h)$ is increasing on $\left(K,\frac{r(K-\delta)}{d} \right)$, and decreasing on $\left(\frac{r(K-\delta)}{d},\infty \right)$.  Thus, if a root $h^{*}_{k}\in(K,k)$ satisfying $L_{1}(h^{*}_{k})=0$ exists, it must be on the set $(K,\frac{r(K-\delta)}{d})$.  Similarly, consider for any fixed $h>K$, the function 
 \begin{equation}
 \begin{split}
 L_{2}(k):=\frac{B}{\varphi(h)}(g_{1}(h)-g_{2}(h))-\int_{h}^{k}(\mathcal{L}g_{1}-r g_{1})(x)\hat{\psi}_{h}(x)m'(x)\mathrm{d}x 
 \end{split}
 \end{equation} 
 Notice $L_{2}(h)<0$ and $L_{2}(k)$ is decreasing on $\left(K, \frac{r}{d}K \right)$, and increasing on $\left(\frac{r}{d}K,\infty \right)$.  Hence, if a root $k^{*}_{h} \in (h,\infty)$ satisfying the condition $L_{2}(k^{*}_{h})=0$ exists, then it has to be on the set $\left(\frac{r}{d}K,\infty \right)$.  
 \end{proof}

\begin{lem}
\label{lem-ineq}
The value function $V(x)$ as defined in Theorem \ref{thm-rd} satisfies \[(x-K)^{+} \leq V(x) \leq (x-K)^{+}+\delta\]  
\end{lem}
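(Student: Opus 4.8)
The plan is to verify $(x-K)^{+}\le V(x)\le (x-K)^{+}+\delta$ region by region, following the four cases in (\ref{valuefunction}); three of them are immediate and the whole substance of the lemma sits on the interval $(h^{*},k^{*})$. For $x\in(0,K)$ we have $(x-K)^{+}=0$ and $V(x)=\delta\,\mathbb{E}_{x}[e^{-r\tau_{[K,h^{*}]}}]\in[0,\delta]$ because $r>0$, which is exactly the claim. For $x\in[K,h^{*}]$ we have $V(x)=(x-K)+\delta=(x-K)^{+}+\delta$, so the upper bound is an equality and the lower bound follows from $\delta>0$. For $x\in[k^{*},\infty)$ we have $V(x)=x-K=(x-K)^{+}$ (as $k^{*}>K$), so the lower bound is an equality and the upper bound follows from $\delta>0$.

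For $x\in(h^{*},k^{*})$, a path started inside the interval leaves it only through $h^{*}$ or $k^{*}$, so the representation in (\ref{valuefunction}) exhibits $V$ as the solution of $\mathcal{L}V=rV$ on $(h^{*},k^{*})$ with $V(h^{*})=(h^{*}-K)+\delta$ and $V(k^{*})=k^{*}-K$. Writing $g_{1}(x):=(x-K)^{+}$ and $g_{2}(x):=(x-K)^{+}+\delta$, I would introduce $D:=V-g_{1}$ and $E:=V-g_{2}$ on $[h^{*},k^{*}]$ and show $D\ge 0$ and $E\le 0$. Using $(\mathcal{L}g_{1}-rg_{1})(x)=rK-dx$ and $(\mathcal{L}g_{2}-rg_{2})(x)=rK-dx-r\delta$, one gets $(\mathcal{L}-r)D=dx-rK$ and $(\mathcal{L}-r)E=dx-r(K-\delta)$, increasing linear functions vanishing at $\tfrac{r}{d}K$ and at $\tfrac{r(K-\delta)}{d}$ respectively. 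The boundary data are $D(h^{*})=\delta>0$, $D(k^{*})=0$, $E(h^{*})=0$, $E(k^{*})=-\delta<0$; smooth fit (differentiability of $V$ at $h^{*},k^{*}$, from (\ref{opt-hk})) together with $V=g_{2}$ on $[K,h^{*}]$ and $V=g_{1}$ on $[k^{*},\infty)$ gives $V'(h^{*})=V'(k^{*})=1$, hence $D'(k^{*})=0$ and $E'(h^{*})=0$; and letting $x$ tend to the endpoints in $\tfrac12\sigma^{2}x^{2}V''=rV-(r-d)xV'$ yields $D''(k^{*}-)=\tfrac{2(dk^{*}-rK)}{\sigma^{2}(k^{*})^{2}}>0$ and $E''(h^{*}+)=\tfrac{2(dh^{*}-r(K-\delta))}{\sigma^{2}(h^{*})^{2}}<0$, where the signs come from Lemma \ref{lemma-hk} (namely $k^{*}>\tfrac{r}{d}K$ and $h^{*}<\tfrac{r(K-\delta)}{d}$). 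In particular $D>0$ just to the left of $k^{*}$ and $E<0$ just to the right of $h^{*}$.

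The heart of the argument is a two-sided maximum principle. At any interior point where $D$ has a strictly negative local minimum, $D'=0$ and $D''\ge0$ force $(\mathcal{L}-r)D=\tfrac12\sigma^{2}x^{2}D''-rD>0$, hence $x>\tfrac{r}{d}K$; at any interior point where $D$ has a strictly positive local maximum, the reverse inequality gives $x<\tfrac{r}{d}K$. If $D$ were negative somewhere, its minimum over $[h^{*},k^{*}]$ would be attained at an interior point $x_{0}$ with $D(x_{0})<0$ and $x_{0}>\tfrac{r}{d}K$; but then, since $D(x_{0})<0$, $D(k^{*})=0$, and $D>0$ just left of $k^{*}$, $D$ would attain a strictly positive maximum on $[x_{0},k^{*}]$ at some interior $x_{M}\in(x_{0},k^{*})$, giving the contradiction $x_{M}<\tfrac{r}{d}K<x_{0}<x_{M}$. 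Hence $D\ge 0$, i.e. $V\ge(x-K)^{+}$. Symmetrically, if $E$ were positive somewhere, its maximum over $[h^{*},k^{*}]$ would be attained at an interior $x_{0}$ with $E(x_{0})>0$ and $x_{0}<\tfrac{r(K-\delta)}{d}$; since $E(h^{*})=0$ and $E<0$ just right of $h^{*}$, $E$ would attain a strictly negative minimum on $[h^{*},x_{0}]$ at some interior $x_{m}\in(h^{*},x_{0})$, giving $x_{m}>\tfrac{r(K-\delta)}{d}>x_{0}>x_{m}$, a contradiction; hence $E\le 0$, i.e. $V\le(x-K)^{+}+\delta$.

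The main obstacle is the interval $(h^{*},k^{*})$, and within it the crucial inputs are Lemma \ref{lemma-hk} — which places $\tfrac{r}{d}K$ and $\tfrac{r(K-\delta)}{d}$ on the correct sides of $h^{*}$ and $k^{*}$ — and the smooth-fit identities from (\ref{opt-hk}); these are precisely what make the sign of $(\mathcal{L}-r)D$ and $(\mathcal{L}-r)E$ cooperate with the endpoint data in the maximum-principle step. The remaining three cases are routine, as indicated above.
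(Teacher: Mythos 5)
Your proof is correct, but it takes a genuinely different route from the paper's. The paper (following Alvarez (2008), Theorem 4.3) sets $\triangle_{1}=V-g_{1}$, $\triangle_{2}=V-g_{2}$ and studies the ratios $\triangle_{1}/\hat{\varphi}_{k^{*}}$ and $\triangle_{2}/\hat{\psi}_{h^{*}}$: differentiating these and substituting the first-order conditions (\ref{opt-hk}) via the identities (\ref{deriv-eq}) expresses each derivative as an explicit integral of $(\mathcal{L}g_{i}-rg_{i})\,m'$ against the relevant fundamental solution, whose sign is controlled by Lemma \ref{lemma-hk}; monotonicity of the ratios together with $\triangle_{1}(k^{*})=\triangle_{2}(h^{*})=0$ then yields both bounds. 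You instead run a direct maximum-principle argument on $D=V-g_{1}$ and $E=V-g_{2}$, using only that $(\mathcal{L}-r)D=dx-rK$ and $(\mathcal{L}-r)E=dx-r(K-\delta)$ are increasing linear functions, the smooth-fit data $D'(k^{*})=E'(h^{*})=0$ (which, through the one-sided second derivatives, pins down the sign of $D$ just left of $k^{*}$ and of $E$ just right of $h^{*}$), and the location bounds $k^{*}>\frac{r}{d}K$ and $h^{*}<\frac{r(K-\delta)}{d}$ from Lemma \ref{lemma-hk}; your interlacing of a strictly negative minimum against a strictly positive maximum is sound, and I verified the sign computations at the local extrema and at the endpoints. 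Both arguments ultimately rest on the same two inputs --- Lemma \ref{lemma-hk} and the optimality conditions at $h^{*},k^{*}$ --- but yours bypasses the scale-function and speed-measure machinery and the integral representation (\ref{opt-Alv}) entirely, at the price of a somewhat more delicate localization of extrema, whereas the paper's version is mechanical once (\ref{opt-Alv}) is in hand and is phrased so as to carry over verbatim to general linear diffusions. Your handling of the three easy regions coincides with the paper's one-line dismissal of them.
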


\begin{proof}
In order to complete the proof, we only need to consider the case
when $x\in(h^{*},k^{*})$.  Indeed, notice $0 \leq V(x)\leq \delta$ when $x\in(0,K)$.  The following argument is inspired by the proof of Theorem 4.3 in Alvarez (2008).  Define for $g_{1}:=(x-K)^{+}$ and $g_{2}:=(x-K)^{+}+\delta$ the following functions,
\begin{equation}
\begin{split}
\triangle_{1}&:=V(x)-g_{1}(x)=g_{2}(h^{*})\frac{\hat{\varphi}_{k^{*}}(x)}{\hat{\varphi}_{k^{*}}(h^{*})}+g_{1}(k^{*})\frac{\hat{\psi}_{h^{*}}(x)}{\hat{\psi}_{h^{*}}(k^{*})}-g_{1}(x)\\
\triangle_{2}&:=V(x)-g_{2}(x)=g_{2}(h^{*})\frac{\hat{\varphi}_{k^{*}}(x)}{\hat{\varphi}_{k^{*}}(h^{*})}+g_{1}(k^{*})\frac{\hat{\psi}_{h^{*}}(x)}{\hat{\psi}_{h^{*}}(k^{*})}-g_{2}(x)
\end{split}
\end{equation}

Now, by our construction, continuity and smooth-pasting hold at $h^{*}$, $k^{*}$.  Thus,  $\triangle_{1}(k^{*})=\triangle_{1}'(k^{*})=0$ and $\triangle_{2}(h^{*})=\triangle_{2}'(h^{*})=0$.  Standard differentiation yields

\begin{equation}
\begin{split}
\frac{\mathrm{d}}{\mathrm{d}x}\left(\frac{\triangle_{1}(x)}{\hat{\varphi}_{k^{*}}(x)}\right)&=\frac{S'(x)}{\hat{\varphi}^{2}_{k^{*}}(x)}\left(\frac{Bg_{1}(k^{*})}{\psi(k^{*})}
-\frac{g_{2}'(x)}{S'(x)}\hat{\varphi}_{k^{*}}(x)+\frac{\hat{\varphi}'_{k^{*}}(x)}{S'(x)}g_{2}(x)\right) \\
\frac{\mathrm{d}}{\mathrm{d}x}\left(\frac{\triangle_{2}(x)}{\hat{\psi}_{h^{*}}(x)}\right)&=\frac{S'(x)}{\hat{\psi}^{2}_{h^{*}}(x)}\left(\frac{-Bg_{2}(h^{*})}{\varphi(h^{*})}
-\frac{g_{1}'(x)}{S'(x)}\hat{\psi}_{h^{*}}(x)+\frac{\hat{\psi}'_{h^{*}}(x)}{S'(x)}g_{1}(x)\right)
\end{split}
\end{equation}

Invoking equations (\ref{opt-hk}) and using the observations (\ref{deriv-eq}), we have

\begin{equation}
\begin{split}
\frac{\mathrm{d}}{\mathrm{d}x}\left(\frac{\triangle_{1}(x)}{\hat{\varphi}_{k^{*}}(x)}\right)&=-\frac{S'(x)}{\hat{\varphi}^{2}_{k^{*}}(x)}\int_{h^{*}}^{x}\hat{\varphi}_{k^{*}}(t)(\mathcal{L}g_{2}-r g_{2})(t)m'(t)\mathrm{d}t<0\\
\frac{\mathrm{d}}{\mathrm{d}x}\left(\frac{\triangle_{2}(x)}{\hat{\psi}_{h^{*}}(x)}\right)&=\frac{S'(x)}{\hat{\psi}^{2}_{h^{*}}(x)}\int^{k^{*}}_{x}\hat{\psi}_{h^{*}}(t)(\mathcal{L}g_{1}-r g_{1})(t)m'(t)\mathrm{d}t<0
\end{split}
\end{equation}
since $x\in(h^{*},k^{*})$ and $h^{*}<\frac{r(K-\delta)}{d}$ and $k^{*}>\frac{r}{d}K$.  Thus, have that $\triangle_{1}(x)\geq \triangle_{1}(k^{*})=0$ and $\triangle_{2}(x)\leq \triangle_{2}(h^{*})=0$ for all $x\in(h^{*},k^{*})$.  Hence, $g_{1}(x)\leq V(x) \leq g_{2}(x)$ for $x\in(h^{*},k^{*})$.  
\end{proof}

\end{document}